\newif\ifuseappendix
\newif\ifjournal
\newcommand{\apxref}[1]{\ifuseappendix Appendix~\ref{#1}\else the extended version of this paper~\cite{supplemental}\fi}
\author{Gregory J.~Puleo \and Olgica Milenkovic}
\title{Correlation Clustering and Biclustering with Locally Bounded Errors}
\renewcommand{\subset}{\subseteq}
\DeclareMathOperator{\errvec}{err}
\newcommand{\floor}[1]{\lfloor{#1}\rfloor}
\newcommand{\ceil}[1]{\lceil{#1}\rceil}
\newcommand{\st}{\colon\,}
\newcommand{\cee}{\mathcal{C}}
\newcommand{\sey}{\mathcal{S}}
\newcommand{\ints}{\mathbb{Z}}
\newcommand{\pints}{\ints^+}
\newcommand{\mindis}{\textsc{MinDisagree}}
\newcommand{\maxagg}{\textsc{MaxAgree}}
\newcommand{\reals}{\mathbb{R}}
\newcommand{\after}{\circ}
\newcommand{\bad}[1]{\textcolor{red}{\emph{#1}}}
\newcommand{\citethis}[1]{\bad{[CITE THIS]}}
\newcommand{\caze}[2]{\textbf{Case {#1}:} \textit{#2}}
\newcommand{\sizeof}[1]{\left\lvert{#1}\right\rvert}
\newtheorem{assumption}{Assumption}
\newtheorem{proposition}{Proposition}
\newtheorem{observation}[proposition]{Observation}
\newtheorem{lemma}[proposition]{Lemma}
\newtheorem{theorem}[proposition]{Theorem}
\newtheorem{corollary}[proposition]{Corollary}
\theoremstyle{definition}
\newtheorem{definition}[proposition]{Definition}
\theoremstyle{remark}
\newcommand{\LP}{\mathsf{L}}
\begin{document}
\maketitle
\begin{abstract}
  We consider a generalized version of the correlation clustering
  problem, defined as follows. Given a complete graph $G$ whose edges
  are labeled with $+$ or $-$, we wish to partition the graph into
  clusters while trying to avoid errors: $+$ edges between clusters or
  $-$ edges within clusters. Classically, one seeks to minimize the
  total number of such errors. We introduce a new framework that
  allows the objective to be a more general function of the number of
  errors at each vertex (for example, we may wish to minimize the
  number of errors at the \emph{worst} vertex) and provide a rounding
  algorithm which converts ``fractional clusterings'' into discrete
  clusterings while causing only a constant-factor blowup in the
  number of errors at each vertex. This rounding algorithm yields
  constant-factor approximation algorithms for the discrete problem
  under a wide variety of objective functions.
\end{abstract}
\section{Introduction}
Correlation clustering is a clustering model first introduced by
Bansal, Blum, and Chawla~\cite{BBC1,BBC2}.  The basic form of the
model is as follows. We are given a collection of objects and, for
some pairs of objects, we are given a judgment of whether the objects
are \emph{similar} or \emph{dissimilar}. This information is
represented as a labeled graph, with edges labeled $+$ or $-$
according to whether the endpoints are similar or dissimilar. Our goal
is to cluster the graph so that $+$ edges tend to be within clusters
and $-$ edges tend to go across clusters.  The number of clusters is
not specified in advance; determining the optimal number of clusters
is instead part of the optimization problem.

Given a solution clustering, an \emph{error} is a $+$ edge whose
endpoints lie in different clusters or a $-$ edge whose endpoints lie
in the same cluster. In the original formulation of the correlation
clustering, the goal is to minimize the total number of errors; this
formulation of the optimization problem is called \mindis. Finding an
exact optimal solution is NP-hard even when the input graph is
complete \cite{BBC1,BBC2}. Furthermore, if the input graph is allowed
to be arbitrary, the best known approximation ratio is $O(\log n)$,
obtained by~\cite{CGW1,CGW2,DEFI1}. Assuming the Unique Games
Conjecture of Khot~\cite{khotUGC}, no constant-factor approximation
for \mindis{} on arbitrary graphs is possible; this follows from
the results of \cite{UGC1, UGC2} concerning the minimum multicut problem
and the connection between correlation clustering and minimum multicut
described in \cite{CGW1, CGW2, DEFI1}.

Since theoretical barriers appear to preclude constant-factor
approximations on arbitrary graphs, much research has focused on
special graph classes such as complete graphs and complete bipartite
graphs, which are the graph classes we consider here.  Ailon,
Charikar, and Newman~\cite{ACN1,ACN2} gave a very simple randomized
$3$-approximation algorithm for \mindis{} on complete graphs. This
algorithm was derandomized by van~Zuylen and Williamson~\cite{VZW},
and a parallel version of the algorithm was studied by Pan,
Papailiopoulos, Recht, Ramchandran, and Jordan~\cite{jordan}.  More
recently, a $2.06$-approximation algorithm was announced by Chawla,
Makarychev, Schramm and Yaroslavtsev~\cite{near-optimal}.  Similar
results have been obtained for complete bipartite graphs. The first
constant approximation algorithm for correlation clustering on
complete bipartite graphs was described by
Amit~\cite{amit2004bicluster}, who gave an $11$-approximation
algorithm. This ratio was improved by Ailon, Avigdor-Elgrabli, Liberty
and van Zuylen~\cite{ailon2012improved}, who obtained a
$4$-approximation algorithm. Chawla, Makarychev, Schramm and
Yaroslavtsev~\cite{near-optimal} announced a $3$-approximation
algorithm for correlation clustering on complete $k$-partite graphs,
for arbitrary $k$, which includes the complete bipartite case.
Bipartite clustering has also been studied, outside the
correlation-clustering context, by Lim, Chen, and
Xu~\cite{lim2015convex}.

We depart from the classical correlation-clustering literature by
considering a broader class of objective functions which also cater to
the need of many community-detection applications in machine learning,
social sciences, recommender systems and
bioinformatics~\cite{cheng2000biclustering,symeonidis2007nearest,kriegel2009clustering}.
The technical details of this class of functions can be found in
Section~\ref{sec:formal}.  As a representative example of this class, we
introduce \emph{minimax correlation clustering}.

In minimax clustering, rather than seeking to minimize the
\emph{total} number of errors, we instead seek to minimize the number
of errors at the \emph{worst-off vertex} in the clustering. Put more
formally, if for a given clustering each vertex $v$ has $y_v$ incident
edges that are errors, then we wish to find a clustering that
minimizes $\max_v y_v$.

Minimax clustering, like classical correlation clustering, is NP-hard
on complete graphs, as we prove in \apxref{apx:npc}. To design
approximation algorithms for minimax clustering, it is necessary to
bound the growth of errors \emph{locally} at each vertex when we round
from a fractional clustering to a discrete clustering; this introduces
new difficulties in the design and analysis of our rounding
algorithm. These new technical difficulties cause the algorithm of
\cite{ACN1,ACN2} to fail in the minimax context, and there is no
obvious way to adapt that algorithm to this new context; this
phenomenon is explored further in \apxref{apx:minimax}.

Minimax correlation clustering on graphs is relevant in detecting
communities, such as gene, social network, or voter communities, in
which no \emph{antagonists} are allowed. Here, an antagonist refers to
an entity that has properties inconsistent with a large number of
members of the community. Alternatively, one may view the minimax
constraint as enabling individual vertex quality control within the
clusters, which is relevant in biclustering applications such as
collaborative filtering for recommender systems, where minimum quality
recommendations have to be ensured for each user in a given category.
As an illustrative example, one may view a complete bipartite graph as
a preference model in which nodes on the left represent viewers and
nodes on the right represent movies. A positive edge between a user
and a movie indicates that the viewer likes the movie, while a
negative edge indicates that they do not like or have not seen the
movie. We may be interested in finding communities of viewers for the
purpose of providing them with joint recommendations. Using a minimax
objective function here allows us to provide a uniform quality of
recommendations, as we seek to minimize the number of errors for the
user who suffers the most errors.

A minimax objective function for a graph partitioning problem
different from correlation clustering was previously studied by
\cite{bansal-minmax}.  In that paper, the problem under consideration
was to split a graph into $k$ roughly-equal-sized parts, minimizing
the total number of edges leaving any part. Thus, the minimum in
\cite{bansal-minmax} is being taken over the parts of the solution,
rather than minimizing over vertices as we do here.

Another idea slightly similar to minimax
clustering has previously appeared in the literature on
fixed-parameter tractability of the \textsc{Cluster Editing} problem,
which is an equivalent formulation of Correlation Clustering.  In
particular, Komusiewicz and Uhlmann~\cite{kom} proved that the
following problem is fixed-parameter tractable for the combined
parameter $(d,t)$:
\begin{quote}
  \textbf{$(d,t)$-Constrained-Cluster Editing}\\
  \textbf{Input:} A labeled complete graph $G$, a function $\tau : V(G) \to \{0, \ldots, t\}$, and nonnegative integers $d$ and $k$.\\
  \textbf{Question:} Does $G$ admit a clustering into at most $d$ clusters with at most $k$ errors
  such that every vertex $v$ is incident to at most $\tau(v)$ errors?
\end{quote}
(Here, we have translated their original formulation into the language
of correlation clustering.)  Komusiewicz and Uhlmann also obtained
several NP-hardness results related to this formulation of the
problem. While their work involves a notion of local errors for
correlation clustering, their results are primarily focused on
fixed-parameter tractability, rather than approximation algorithms,
and are therefore largely orthogonal to the results of this paper.

The contributions of this paper are organized as follows. In
Section~\ref{sec:formal}, we introduce and formally express our
framework for the generalized version of correlation clustering, which
includes both classical clustering and minimax clustering as special
cases. In Section~\ref{sec:algcc}, we give a rounding algorithm which
allows the development of constant-factor approximation algorithms for
the generalized clustering problem. In Section~\ref{sec:algbc}, we
give a version of this rounding algorithm for complete bipartite
graphs.

\ifuseappendix In Appendix~\ref{apx:minimax}, we discuss minimax
clustering in more detail, and show that algorithms similar to the
Ailon--Charikar--Newman algorithm fail in the minimax context.  In
Appendix~\ref{apx:maxagg} we discuss the approximation properties of
the \maxagg{} formulation of minimax clustering, where the objective
is to maximize the number of correct edges, rather than minimize the
number of incorrect edges, at the worst vertex.  In
Appendix~\ref{apx:npc} and Appendix~\ref{apx:bip-npc} we prove that
the minimax correlation clustering problem is NP-hard on complete
graphs and complete bipartite graphs,
respectively. Appendix~\ref{apx:details} contains technical details
for various proofs.  \fi
\section{Framework and Formal Definitions}\label{sec:formal}
In this section, we formally set up the framework we will use
for our broad class of correlation-clustering objective functions.
\begin{definition}
  Let $G$ be an edge-labeled graph. A \emph{discrete clustering} (or
  just a \emph{clustering}) of $G$ is a partition of $V(G)$.  A
  \emph{fractional clustering} of $G$ is a vector $x$ indexed by
  ${V(G) \choose 2}$ such that $x_{uv} \in [0,1]$ for all $uv \in
  {V(G) \choose 2}$ and such that $x_{vz} \leq x_{vw} + x_{wz}$ for
  all distinct $v,w,z \in V(G)$.
\end{definition}
If $x$ is a fractional clustering, we can view $x_{uv}$ as a
``distance'' from $u$ to $v$; the constraints $x_{vz} \leq x_{vw} +
x_{wz}$ are therefore referred to as \emph{triangle inequality}
constraints. We also adopt the convention that $x_{uu}=0$ for all $u$.

In the special case where all coordinates of $x$ are $0$ or $1$, the
triangle inequality constraints guarantee that the relation defined by
$u \sim v$ iff $x_{uv} = 0$ is an equivalence relation. Such a vector
$x$ can therefore naturally be viewed as a discrete clustering, where
the clusters are the equivalence classes under $\sim$. By viewing a
discrete clustering as a fractional clustering with integer
coordinates, we see that fractional clusterings are a continuous
relaxation of discrete clusterings, which justifies the name. This
gives a natural notion of the total weight of errors at a given
vertex.
\begin{definition}
  Let $G$ be an edge-labeled complete graph, and let $x$ be a fractional clustering of $G$.
  The \emph{error vector} of $x$ with respect to $G$, written
  $\errvec(x)$, is a real vector indexed by $V(G)$ whose coordinates
  are defined by
  \[ \errvec(x)_v = \sum_{w \in N^+(v)}x_{vw} + \sum_{w \in N^-(v)}(1 - x_{vw}). \]
  If $\cee$ is a clustering of $G$ and $x^{\cee}$ is the natural
  associated fractional clustering, we define $\errvec(\cee)$ as
  $\errvec(x^{\cee})$.
\end{definition}
We are now prepared to formally state the optimization problem we wish to solve.
Let $\reals^n_{\geq 0}$ denote the set of vectors in $\reals^n$ with all coordinates
nonnegative.
Our problem is parameterized by a function $f : \reals^{n}_{\geq 0} \to \reals$.
\begin{quote}
  \textbf{$f$-Correlation Clustering}\\
  \textbf{Input:} A labeled graph $G$.\\
  \textbf{Output:} A clustering $\cee$ of $G$.\\
  \textbf{Objective:} Minimize $f(\errvec(\cee))$.
\end{quote}
In order to approximate $f$-Correlation Clustering, we introduce a relaxed
version of the problem.
\begin{quote}
  \textbf{Fractional $f$-Correlation Clustering}\\
  \textbf{Input:} A labeled graph $G$.\\
  \textbf{Output:} A fractional clustering $x$ of $G$.\\
  \textbf{Objective:} Minimize $f(\errvec(x))$.
\end{quote}
If $f$ is convex on $\reals_{\geq 0}^n$, then using standard
techniques from convex optimization \cite{boyd-cvxopt}, the Fractional
$f$-Correlation Clustering problem can be approximately solved in
polynomial time, as the composite function $f \after \errvec$ is
convex and the constraints defining a fractional clustering are linear
inequalities in the variables $x_e$. When $G$ is a complete graph, we
then employ a rounding algorithm based on the algorithm of Charikar,
Guruswami, and Wirth~\cite{CGW1,CGW2} to transform the fractional
clustering into a discrete clustering. Under rather modest conditions
on $f$, we are able to obtain a constant-factor bound on the error
growth, that is, we can produce a clustering $\cee$ such that
$f(\errvec(\cee)) \leq c f(\errvec(x))$, where $c$ is a constant not
depending on $f$ or $x$. In particular, we require the following
assumptions on $f$.
\begin{assumption}\label{fgood}
  We assume that $f : \reals^{n}_{\geq 0} \to \reals$ has the following properties.
  \begin{enumerate}[(1)]
  \item $f(cy) \leq c f(y)$ for all $c \geq 0$ and all $y \in \reals^{n}$, and
  \item If $y,z \in \reals^n_{\geq 0}$ are vectors with $y_i \leq z_i$ for all $i$, 
    then $f(y) \leq f(z)$.
  \end{enumerate}
\end{assumption}
Under Assumption~\ref{fgood}, the claim that $f(\errvec(\cee)) \leq c
f(\errvec(x))$ follows if we can show that $\errvec(\cee)_v \leq c
\errvec(x)_v$ for every vertex $v \in V(G)$. This is the property we
prove for our rounding algorithms.

We will slightly abuse terminology by referring to the constant $c$ as
an \emph{approximation ratio} for the rounding algorithm; this
notation is motivated by the fact that when $f$ is linear, the
Fractional $f$-Correlation Clustering problem can be solved exactly in
polynomial time, and applying a rounding algorithm with constant $c$
to the fractional solution yields a $c$-approximation algorithm to the
(discrete) $f$-Correlation Clustering problem. In contrast, when $f$
is nonlinear, we may only be able to obtain a
$(1+\epsilon)$-approximation for the Fractional $f$-Correlation
Clustering problem, in which case applying the rounding algorithm
yields a $c(1+\epsilon)$-approximation algorithm for the discrete
problem.

A natural class of convex objective functions obeying
Assumption~\ref{fgood} is the class of $\ell^p$ norms. For all $p \geq
1$, the $\ell^p$-norm on $\reals^n$ is defined by
\[ \ell^p(x) = \left(\sum_{i=1}^n \sizeof{x_i}^p\right)^{1/p}. \]
As $p$ grows larger, the $\ell^p$-norm puts more emphasis on the coordinates with
larger absolute value. This justifies that definition of the $\ell^{\infty}$-norm as
\[ \ell^{\infty}(x) = \max\{x_1, \ldots, x_n\}. \]
Classical correlation clustering is the case of $f$-Correlation Clustering
where $f(x) = \frac{1}{n}\ell^1(x)$, while minimax correlation clustering is
the case of $f$-Correlation Clustering where $f(x) = \ell^{\infty}(x)$. 

Our emphasis on convex $f$ is due to the fact that convex programming
techniques allow the Fractional $f$-Correlation Clustering problem to
be approximately solved in polynomial time when $f$ is convex. However, the correctness
of our rounding algorithm does not depend on the convexity of $f$,
only on the properties listed in Assumption~\ref{fgood}. If $f$ is
nonconvex and obeys Assumption~\ref{fgood}, and we produce a ``good''
fractional clustering $x$ by some means, then our algorithm still
produces a discrete clustering $\cee$ with $f(\errvec(\cee)) \leq
cf(\errvec(x))$.

\section{A Rounding Algorithm for Complete Graphs}\label{sec:algcc}
We now describe a rounding algorithm to transform an arbitrary
fractional clustering $x$ of a labeled complete graph $G$ into a
clustering $\cee$ such that $\errvec(\cee)_v \leq c \errvec(x)_v$ for
all $v \in V(G)$.

Our rounding algorithm is based on the algorithm of Charikar,
Guruswami, and Wirth~\cite{CGW1,CGW2} and is shown in
Algorithm~\ref{alg:round}. The main difference between
Algorithm~\ref{alg:round} and the algorithm of \cite{CGW1,CGW2} is the
new strategy of choosing a pivot vertex that maximizes
$\sizeof{T^*_u}$; in \cite{CGW1,CGW2}, the pivot vertex is chosen
arbitrarily. Furthermore, the algorithm of \cite{CGW1,CGW2} always
uses $\alpha=1/2$ as a cutoff for forming ``candidate clusters'',
while we express $\alpha$ as a parameter which we later choose in
order to optimize the approximation ratio.
\begin{algorithm*}
  \caption{Round fractional clustering $x$ to obtain a discrete clustering, using threshold parameters $\alpha, \gamma$ with $0 < \gamma < \alpha < 1/2$.}
  \label{alg:round}
  \begin{algorithmic}
    \STATE{Let $S = V(G)$.}
    \WHILE{$S \neq \emptyset$}
    \STATE{For each $u \in S$, let $T_u = \{w \in S-\{u\} \st x_{uw} \leq \alpha\}$ and let $T^*_u = \{w \in S-\{u\} \st x_{uw} \leq \gamma\}$.}
    \STATE{Choose a pivot vertex $u \in S$ that maximizes $\sizeof{T^*_u}$.}
    \STATE{Let $T = T_u$.}
    \IF{$\sum_{w \in T}x_{uw} \geq \alpha\sizeof{T}/2$}
    \STATE{Output the cluster $\{u\}$.} \COMMENT{Type $1$ cluster}
    \STATE{Let $S = S-\{u\}$.}
    \ELSE
    \STATE{Output the cluster $\{u\} \cup T$.} \COMMENT{Type $2$ cluster}
    \STATE{Let $S = S - (\{u\} \cup T)$.}
    \ENDIF
    \ENDWHILE      
  \end{algorithmic}
\end{algorithm*}

Under the classical objective function, an optimal fractional
clustering is the solution to a linear program, which motivates the
following notation for the more general case.
\begin{definition}
  If $uv$ is an edge of a labeled graph $G$, we define the
  \emph{LP-cost} of $uv$ relative to a fractional clustering
  $x$ to be $x_{uv}$ if $uv \in E^+$, and $1-x_{uv}$ if $uv \in E^-$.
  Likewise, the \emph{cluster-cost} of an edge $uv$ is $1$ if $uv$ is an
  error in the clustering produced by Algorithm~\ref{alg:round}, and $0$
  otherwise.
\end{definition}
Our general strategy for obtaining the constant-factor error bound for
Algorithm~\ref{alg:round} is similar to that of \cite{CGW1,CGW2}. Each
time a cluster is output, we pay for the cluster-cost of the errors
incurred by ``charging'' the cost of these errors to the LP-costs of
the fractional clustering.  The main difference between our proof and
the proof of \cite{CGW1,CGW2} is that we must pay for errors
\emph{locally}: for each vertex $v$, we must pay for \emph{all}
clustering errors incident to $v$ by charging to the LP cost incident
to $v$. In particular, every clustering error must now be paid for at
\emph{each} of its endpoints, while in \cite{CGW1,CGW2}, it was enough
to pay for each clustering error at \emph{one} of its endpoints. For
edges which cross between a cluster and its complement, this requires
a different analysis at each endpoint, a difficulty which was not
present in \cite{CGW1,CGW2}. Our proof emphasizes the solutions to
these new technical problems; the parts of the proof that are
technically nontrivial but follow earlier work are omitted due to
space constraints but can be found in \apxref{apx:details}.

\begin{observation}\label{obs}
  Let $x$ be a fractional clustering of a graph $G$, and let $w,z \in V(G)$.
  For any vertex $u$, we have $x_{wz} \geq x_{uz} - x_{uw}$ and
  $1 - x_{wz} \geq 1 - x_{uz} - x_{uw}$.
\end{observation}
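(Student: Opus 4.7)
The plan is to treat this as essentially a direct restatement of the triangle inequality that defines a fractional clustering, with a little care taken for degenerate cases where $u$, $w$, or $z$ may coincide.

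First I would handle the nondegenerate case where $u$, $w$, $z$ are distinct. By the triangle-inequality constraint in the definition of a fractional clustering, $x_{uz} \leq x_{uw} + x_{wz}$, and rearranging immediately yields $x_{wz} \geq x_{uz} - x_{uw}$, giving the first inequality. For the second inequality, I would apply the triangle inequality in the other direction, namely $x_{wz} \leq x_{wu} + x_{uz} = x_{uw} + x_{uz}$ (using that $x$ is symmetric in its indices since it is indexed by unordered pairs), and then subtract both sides from $1$ to obtain $1 - x_{wz} \geq 1 - x_{uz} - x_{uw}$.

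Then I would briefly dispose of the degenerate cases using the stated convention $x_{uu} = 0$: if $u = w$, both inequalities reduce to trivial statements like $x_{wz} \geq x_{wz}$ and $1 - x_{wz} \geq 1 - x_{wz}$; if $u = z$, then $x_{uz} - x_{uw} = -x_{uw} \leq 0 \leq x_{wz}$ and similarly $1 - x_{uz} - x_{uw} = 1 - x_{uw} \leq 1 - x_{wz}$ since $x_{uw} = x_{wu} = x_{wz}$; and if $w = z$, both sides agree. None of these cases poses any obstacle.

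There is really no hard step here; the observation is just two rearrangements of the triangle inequality, packaged in a form convenient for later reference in the analysis of Algorithm~\ref{alg:round}, where distances from a pivot vertex $u$ will be used to bound distances between other pairs $w, z$.
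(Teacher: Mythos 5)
Your proof is correct and follows exactly the argument the paper intends (the paper leaves this observation unproved as a trivial rearrangement of the triangle inequality, which is what you supply). The careful treatment of the degenerate cases using $x_{uu}=0$ is a nice touch, though it is not something the paper bothers to spell out.
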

\begin{theorem}\label{thm:complete-appx}
  Let $G$ be a labeled complete graph, let $\alpha$ and $\gamma$ be
  parameters with $0 < \gamma < \alpha < 1/2$, and let $x$ be any
  fractional clustering of $G$. If $\cee$ is the clustering produced
  by Algorithm~\ref{alg:round} with the given input, then for all
  $v \in V(G)$ we have $\errvec(\cee)_v \leq c\errvec(x)_v$, where $c$
  is a constant depending only on $\alpha$ and $\gamma$.
\end{theorem}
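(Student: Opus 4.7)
The plan is to fix an arbitrary vertex $v$ and bound $\errvec(\cee)_v$ by charging every edge incident to $v$ that becomes an error to some LP-cost contributed at $v$, with the total charge per unit of LP-cost bounded by a constant. Errors at $v$ come in two flavors: \emph{same-cluster} errors, where $vw \in E^-$ but $v,w$ lie in the same cluster; and \emph{cross} errors, where $vw \in E^+$ but $v,w$ lie in distinct clusters. Same-cluster errors are essentially immediate: the common cluster must be a Type~2 cluster $\{u\} \cup T_u$, so $x_{uv}, x_{uw} \leq \alpha$, and the triangle inequality gives $x_{vw} \leq 2\alpha$, whence $(1-x_{vw}) \geq 1-2\alpha$; hence $v$ sustains at most $\errvec(x)_v/(1-2\alpha)$ same-cluster errors.

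For cross errors I would split on the role of $v$ at its moment of placement. Let $u$ be the pivot when $v$ is placed. If $v = u$ and the cluster is Type~2, then every cross error $vz$ satisfies $x_{vz} > \alpha$ (else $z \in T_v$), so each cross error contributes LP-cost $> \alpha$ at $v$. If $v = u$ forms a Type~1 singleton, every $w \in N^+(v)$ is a cross error; those with $x_{vw} > \alpha$ pay $>\alpha$ each in LP, those with $\gamma < x_{vw} \leq \alpha$ pay $>\gamma$ each, and for the remaining $w \in T^*_v \cap N^+(v)$ I would use the triggering inequality $\sum_{w \in T_v}x_{vw} \geq \alpha|T_v|/2$ combined with the estimate $\sum_{w \in T_v \cap N^-(v)}x_{vw} \leq \alpha|T_v \cap N^-(v)|$ and the contribution $(1-\alpha)|T_v\cap N^-(v)|$ at $v$ from negative edges to deduce $\errvec(x)_v \geq \alpha|T_v|/2$, hence $|T^*_v \cap N^+(v)| \leq |T_v| \leq (2/\alpha)\errvec(x)_v$. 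Finally, if $v \neq u$ but $v \in T^*_u$, then $x_{uv} \leq \gamma$, and for any cross-error endpoint $z$ still in $S$ at the time of $u$'s pivot we have $x_{uz} > \alpha$, so by Observation~\ref{obs} $x_{vz} > \alpha - \gamma$ and each such edge contributes LP-cost $>\alpha - \gamma$ at $v$.

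The genuine obstacle, and the heart of the new technical content, is the remaining sub-case $v \neq u$ with $v \in T_u \setminus T^*_u$, so $\gamma < x_{uv} \leq \alpha$. Here the triangle inequality only yields $x_{vz} \geq x_{uz} - x_{uv} > 0$, so cross errors at $v$ can individually carry arbitrarily small LP-cost; this is precisely the ``pay at each endpoint'' difficulty the authors highlight. My plan is to give up on pointwise charging and amortize using the new pivot-selection rule: $u$ was chosen because $|T^*_u|$ is maximum in $S$, hence in particular $|T^*_v| \leq |T^*_u|$ within the current $S$, which limits how many vertices can lie within distance $\gamma$ of $v$. Cross errors at $v$ with endpoint $z$ still in $S$ at $u$'s pivot satisfy $x_{uz} > \alpha$ and can be charged against LP-cost at $u$ transferred to $v$ via Observation~\ref{obs}, combined with the $1-\alpha$ LP-cost contribution at $v$ from $uv$ when $uv \in E^-$; cross errors with endpoint $z$ placed in an earlier cluster with pivot $u_z$ are analysed symmetrically at $u_z$, invoking $z$'s own Type~1 condition when $z$ is a singleton and the max-$|T^*|$ rule applied at $u_z$ together with Observation~\ref{obs} otherwise. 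The tightest part of the bookkeeping will be ensuring that LP-cost transferred across endpoints is not double-counted against the same $v$. Once every sub-case yields a linear bound, summing gives $\errvec(\cee)_v \leq c\,\errvec(x)_v$ with $c = c(\alpha,\gamma)$ a sum of terms of the form $1/(1-2\alpha)$, $1/\alpha$, $1/\gamma$, $1/(\alpha-\gamma)$, and the final constant is obtained by optimizing over $0 < \gamma < \alpha < 1/2$.
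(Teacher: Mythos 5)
Your decomposition of the error at $v$ into same-cluster errors and cross errors, and the further split of cross errors by the role of $v$ at its placement time, tracks the paper's case analysis at a high level. The same-cluster bound via $1-x_{vw} \geq 1-2\alpha$ is correct, your treatment of $v = u$ (Type~1 or Type~2 pivot) is sound (if a bit more elaborate than necessary in the Type~1 subcase), and you have correctly identified the $\max\lvert T^*_u\rvert$ pivoting rule as the source of leverage in the hard subcase $\gamma < x_{uv} \leq \alpha$. However, two genuine gaps remain.

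First, the case you defer with ``analysed symmetrically at $u_z$'' is precisely where the argument is \emph{not} symmetric, and it is the hardest part of the proof. Concretely: suppose $w$ is output as a Type~1 singleton and at that moment $v \in S \cap N^+(w)$ with $x_{vw}$ tiny (say $x_{vw} < k_2\alpha$). Then $vw$ is a cluster error at $v$ whose own LP-cost is arbitrarily small, and nothing about ``$w$'s own Type~1 condition'' directly rescues you, because that condition speaks about the neighborhood of $w$, not $v$. The paper's resolution is a global amortization scheme (the ``bank account'' for $v$): at the first such bad pivot $w$, one shows that the Type-1 trigger $\sum_{z\in T_w}x_{wz}\geq \alpha\lvert T_w\rvert/2$ forces a large set $B\subset T_w$ of vertices at moderate distance from $w$, hence from $v$; the LP-cost of the edges $vz$, $z\in B$, is bounded below, $\lvert B\rvert$ grows with the number of future potential bad pivots for $v$, and a single charge to those edges pre-pays for all subsequent bad pivots. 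This is a distinct argument that your proposal neither states nor sketches, and without it the case simply does not close.

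Second, your treatment of the $v \in T_u\setminus T^*_u$ Type-2 subcase is not yet an argument. You observe correctly that pointwise charging fails and that $\lvert T^*_v\rvert\leq\lvert T^*_u\rvert$ must be used, but then propose to ``charge against LP-cost at $u$ transferred to $v$,'' which is not a well-defined operation: the charges must land on edges incident to $v$, since we need $\errvec(\cee)_v$ bounded by $\errvec(x)_v$. The working version of the idea (as in the paper) is to charge the edges $vz$ for $z\in T^*_u$: each such edge has LP-cost at least $k_1\alpha-\gamma$ (using both $x_{vz}\geq x_{uv}-x_{uz}$ and $1-x_{vz}\geq 1-x_{uv}-x_{uz}$, so the bound holds for both edge signs), and there are $\lvert T^*_u\rvert\geq\lvert T^*_v\rvert\geq\lvert X\rvert$ of them, where $X$ is the set of cheap cross-error endpoints. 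You would need to state this explicitly; the single edge $uv$ that you mention contributing $1-\alpha$ LP-cost cannot absorb $\lvert X\rvert$ errors. You also need to introduce a threshold $k_1\alpha$ strictly greater than $\gamma$ as the boundary for the ``hard'' subcase, since the chain $x_{uv}\leq\gamma \Rightarrow x_{vz}\geq\alpha-\gamma$ does not cover the range $\gamma < x_{uv}\leq k_1\alpha$ where pointwise charging still works via $x_{vz}\geq(1-k_1)\alpha$.

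In summary, the proposal reproduces the skeleton of the paper's proof and names the right mechanisms, but the two ingredients that actually make the local charging close---the bank-account amortization for bad Type-1 pivots, and the explicit charge to edges $vz$, $z\in T^*_u$ in the far Type-2 subcase---are absent or misdescribed, and these are exactly the parts the paper identifies as the new technical content beyond Charikar--Guruswami--Wirth.
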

\begin{proof}
  Let $k_1, k_2, k_3$ be constants to be determined, with
  $1/2 < k_1 < 1$ and $0 < 2k_2 \leq k_3 < 1/2$.  Also assume
  that $k_1\alpha > \gamma$ and that $k_2\alpha \leq 1-2\alpha$.

  To prove the approximation ratio, we consider the cluster-costs
  incurred as each cluster is output, splitting into cases according
  to the type of cluster. In our analysis, as the algorithm runs, we
  will mark certain vertices as ``safe'', representing the fact that
  some possible future clustering costs have been paid for in
  advance. Initially, no vertex is marked as safe.

  \caze{1}{A Type 1 cluster is output.} Let
  $X = S \cap N^+(u)$, with $S$ as in Algorithm~\ref{alg:round}.  The
  new cluster-cost incurred at $u$ is $\sizeof{X}$, and for each
  $v \in X$, a new cluster-cost of $1$ is incurred at $v$.

  First we pay for the new cluster cost incurred at $u$. For each edge
  $uv$ with $v \in T$, we have $x_{uv} \leq \alpha$ and so $1 - x_{uv}
  \geq 1-\alpha \geq x_{uv}$. Thus, the total LP cost of edges $uv$
  with $v \in T$ is at least $\sum_{v \in T}x_{uv}$, which is at least
  $\alpha\sizeof{T}/2$ since $\{u\}$ is output as a Type 1 cluster. Thus,
  charging each edge $uv$ with $v \in T$ a total of $2/\alpha$ times
  its LP-cost pays for the cluster-cost of any positive edges from $u$
  to $T$.  On the other hand, if $uv$ is a positive edge with $v \in S
  - T$, then since $v \notin T$, we have $x_{uv} \geq \alpha$.  Hence,
  the LP-cost of $uv$ is at least $\alpha$, and charging $1/\alpha$
  times the LP-cost of $uv$ pays for the cluster-cost of this edge.

  Now let $v \in X$; we must pay for the new cluster cost at $v$. If
  $x_{uv} \geq k_2\alpha$, then the edge $uv$ already incurs LP
  cost at least $k_2\alpha$, so the new cost at $v$ is only $1/(k_2\alpha)$ times
  the LP-cost of the edge $uv$. So assume $x_{uv} < k_2\alpha$. In this case,
  we say that $u$ is a \emph{bad pivot} for $v$.

  First suppose that $v$ is not safe (as is initially the case). We
  will make a single charge to the edges incident to $v$ that is large
  enough to pay for both the edge $uv$ and for all possible
  \emph{future} bad pivots, and then we will mark $v$ as safe to
  indicate that we have done this. The basic idea is that if $v$ has
  many possible bad pivots, then since $x_{uv}$ is ``small'', all of
  these possible bad pivots are also close to $u$, thus included in
  $T_u$.  Since $\sum_{w \in T_u}x_{uw} \geq \alpha\sizeof{T_u}/2$,
  there is a large set $B \subset T_u$ of vertices that are
  ``moderately far'' from $u$, and therefore moderately far from $v$.
  The number of these vertices grows with the number of bad pivots, so
  charging all the edges $vz$ for $z \in B$ is sufficient to pay for
  all bad pivots.

  We now make this argument rigorous. Let $P_v$ be the set of
  potential bad pivots for $v$, defined by
  \[ P_v = \{p \in S \st x_{vp} < k_2\alpha\}. \]
  Note that $u \in P_v$. Since $k_2 < 1/4$, we have $x_{up} \leq
  x_{uv} + x_{vp} < \alpha/2$ for all $p \in P_v$; hence $P_v \subset T$.
  Define the vertex set $B$ by
  \[ B = \{z \in T \st x_{uz} > k_3\alpha\}. \]
  Since $x_{uz} \leq \alpha$ for all $z \in T$, we see that
  \[ \sum_{z \in T}x_{uz} \leq k_3\alpha\sizeof{T-B} +
  \alpha\sizeof{B}. \]
  On the other hand, since $\{u\}$ is output as a Type 1 cluster, we
  have \[\sum_{z \in T}x_{uz} \geq \alpha\sizeof{T}/2.\]
  Combining these inequalities and rearranging, we obtain
  $\sizeof{B} \geq (1-2k_3)\sizeof{T-B}$. For each vertex $z \in B$,
  we have $x_{vz} \geq x_{uz} - x_{uv} \geq (k_3-k_2)\alpha$; in particular,
  since $k_3 \geq 2k_2$, we have $x_{vz} \geq k_2\alpha$, so that $z \notin P_v$.
  Hence $\sizeof{T-B} \geq \sizeof{P_v}$, and we have $\sizeof{B} \geq (1-2k_3)\sizeof{P_v}$.

  On the other hand, for $z \in B$ we also have $1 - x_{vz} \geq 1 - x_{uv} - x_{uz} \geq 1-(1+k_2)\alpha$.
  It follows that each edge $vz$ for $z \in B$ has LP-cost at least $\min((k_3-k_2)\alpha,\ 1-(1+k_2)\alpha)$,
  independent of whether $vz$ is positive or negative. It is easy to check that since $\alpha < 1/2$ and $k_3 < 1$, this
  minimum is always achieved by $(k_3-k_2)\alpha$. Therefore, we can pay for the (possible)
  Type-1-cluster cost of all edges $vp$ for $p \in P_v$ by charging each edge $vz$ with $z \in B$
  a total of
  \[ \frac{1}{(1-2k_3)(k_3-k_2)\alpha} \]
  times its LP-cost. We make all these charges when the cluster $\{u\}$ is created and put
  them in a ``bank account'' to pay for later Type-1-cluster costs for $v$. Then we mark
  $v$ as safe. The total charge in the bank account is at least $\sizeof{P_v}$, which is enough
  to pay for all bad pivots for $v$.

  We have just described the case where $u$ is a bad pivot and $v$ is
  not safe. On the other hand, if $u$ is a bad pivot and $v$ is safe,
  then $v$ already has a bank account large enough to pay for all its
  bad pivots, and we simply charge $1$ to the account to pay for the
  edge $uv$.  \medskip

  \caze{2}{A Type 2 cluster $\{u\} \cup T$ is output.} The negative edges
  within $\{u\} \cup T$ are easy to pay for: if $vw$ if a negative edge inside
  $\{u\} \cup T$, then we have $1-x_{vw} \geq 1-x_{uv}-x_{uw} \geq 1-2\alpha$,
  so we can pay for each of these edges by charging a factor of $\frac{1}{1-2\alpha}$
  times its LP-cost.

  Thus, we consider edges joining $\{u\} \cup T$ with $S - (\{u\} \cup
  T)$. We call these edges \emph{cross-edges} for their endpoints.  A
  standard argument (see \apxref{apx:details}) shows that for
  $z \in S - (\{u\} \cup T)$, the total cluster-cost of the
  cross-edges for $z$ is at most $\max\{1/(1-2\alpha), 2/\alpha\}$
  times the LP-cost of those edges, so the vertices outside $\{u\}
  \cup T$ can be dealt with easily.

  However, we also must bound the cluster-cost at vertices inside
  $\{u\} \cup T$. This is where we use the maximality of
  $\sizeof{T^*_u}$.

  Let $w \in \{u\} \cup T$. First consider the positive cross-edges
  $wz$ such that $x_{wz} \geq \gamma$. Any such edge has cluster-cost
  $1$ and already has LP-cost at least $\gamma$, so charging
  $1/\gamma$ times the LP-cost to such an edge pays for its cluster
  cost. Now let $X = \{z \in S - (\{u\} \cup T) \st x_{wz} < \gamma\}$; we
  still must pay for the edges $wz$ with $z \in X$.

  If $x_{uw} \leq k_1\alpha$, which includes the case $u=w$, then for
  all $z \in X$, we have $x_{wz} \geq x_{uz} - x_{uw} \geq \alpha -
  k_1\alpha = (1-k_1)\alpha$. Hence, for any positive edge $wz$ with
  $z \in X$, the LP-cost of $wz$ is at least $(1-k_1)\alpha$, and so
  the cluster cost of the edge $wz$ is at most $1/((1-k_1)\alpha)$
  times the LP cost. Charging this factor to each cross-edge pays for
  the cluster-cost of each cross-edge.

  Now suppose $x_{uw} > k_1\alpha$. Since $k_1\alpha > \gamma$, this
  implies $w \notin T^*_u$. In this case, it is possible that $w$ may
  have many positive neighbors $z \in X$ for which $x_{wz}$ is quite
  small, so we cannot necessarily pay for the cluster-cost of the
  edges joining $w$ and $X$ by using their LP-cost. Instead, we charge
  their cluster-cost to the LP-cost of edges within $T$.

  Observe that $X \subset T^*_w$, and hence $\sizeof{T^*_w} \geq \sizeof{X}$.
  By the maximality of $\sizeof{T^*_u}$, this implies that $\sizeof{T^*_u} \geq \sizeof{X}$. 
  Now for any $v \in T^*_u$, we have the following bounds:
  \begin{align*}
    x_{wv} &\geq x_{uw} - x_{uv} \geq k_1\alpha - \gamma,\\
    1-x_{wv} &\geq 1 - x_{uw} - x_{uv} \geq 1 - \alpha - \gamma.    
  \end{align*}
  Since $\alpha < 1/2$ and $k_1 \leq 1$, we have $k_1\alpha \leq \alpha < 1-\alpha$, so these
  lower bounds imply that each edge $wv$ with $v \in T^*_u$ has LP-cost at least $k_1\alpha-\gamma$, independent
  of whether $wv$ is a positive or negative edge. Thus, the total LP cost of edges joining $w$ to $T^*_u$
  is at least $(k_1\alpha-\gamma)\sizeof{T^*_u}$.

  Since the total cluster-cost of edges joining $w$ and $X$ is at most
  $\sizeof{X}$ and since $\sizeof{T^*_u} \geq \sizeof{X}$, we can pay
  for these edges by charging each edge $wv$ with $v \in T^*_u$ a
  factor of $\frac{1}{k_1\alpha-\gamma}$ times its LP-cost.

  \medskip
  Having paid for all cluster-costs, we now look at the total charge accrued at each vertex.
  Fix any vertex $v$ and an edge $vw$ incident to $v$. We bound the total amount charged to $vw$
  by $v$ in terms of the LP-cost of $vw$. There are three distinct possibilities for the edge $vw$:
  either $vw$ ended inside a cluster, or $v$ was clustered before $w$, or $w$ was clustered before $v$.

  \caze{1}{$vw$ ended within a cluster.} In this case, $v$ may have made the following charges:
  \begin{itemize}
  \item A charge of $\frac{1}{(1-2k_3)(k_3-k_2)\alpha}$ times the LP-cost, to pay for a ``bank account'' for $v$,
  \item A charge of $\frac{1}{1-2\alpha}$ times the LP-cost, to pay for $vw$ itself if $vw$ is a negative edge,
  \item A charge of $\frac{1}{k_1\alpha-\gamma}$ times the LP-cost, to pay for positive edges leaving the $v$-cluster.
  \end{itemize}
  Thus, in this case the total cost charged to $vw$ by $v$ is at most
  $c_1$ times the LP-cost of $vw$, where
  \[
    c_1 = \frac{1}{(1-2k_3)(k_3-k_2)\alpha} + \frac{1}{1-2\alpha} + \frac{1}{k_1\alpha-\gamma}.
    \]

  \caze{2}{$v$ was clustered before $w$.} In this case, $v$ may have made the following charges:
  \begin{itemize}
  \item A charge of $\frac{1}{(1-2k_3)(k_3-k_2)\alpha}$ times the LP-cost, to pay for a ``bank account'' for $v$,
  \item A charge of at most $\frac{2}{\alpha}$ times the LP-cost, to pay for all cross-edges if $v$ was output as a Type 1 cluster,
  \item A charge of at most $\max\left\{\frac{1}{(1-k_1)\alpha},\ \frac{1}{\gamma} \right\}$ times the LP-cost,
  to pay for $vw$ if $v$ was output in a Type 2 cluster.
  \end{itemize}
  Note that $k_1 > 1/2$ implies that $\frac{1}{(1-k_1)\alpha} \geq \frac{2}{\alpha}$, so we may disregard the case
  where $v$ is output as a Type 1 cluster.
  Thus, in this case the total cost charged to $vw$ by $v$ is at most $c_2$ times the LP-cost of $vw$, where
  \[ c_2 = \frac{1}{(1-2k_3)(k_3-k_2)\alpha} + \max\left\{\frac{1}{(1-k_1)\alpha},\ \frac{1}{\gamma}\right\}. \]

  \caze{3}{$w$ was clustered before $v$.} In this case, $v$ may have made the following charges:
  \begin{itemize}
  \item A charge of at most $\frac{1}{(1-2k_3)(k_3-k_2)\alpha}$ times the LP-cost, to pay for a ``bank account'' for $v$,
  \item A charge of at most $\frac{1}{k_2\alpha}$ times the LP-cost, to pay for the cluster-cost of $vw$ if $vw$ is a positive edge
    and $w$ was output as a Type 1 cluster,
  \item A charge of at most
    \[ \max\left\{\frac{1}{1-2\alpha},\ \frac{2}{\alpha}\right\} \]
    times the LP-cost, to pay for $vw$ if $w$ was output in a Type 2 cluster.
  \end{itemize}
  Clearly $vw$ cannot receive both the second and third types of
  charge. Furthermore, since $k_2 \leq 1/4$, we have
  $\frac{1}{k_2\alpha} \geq \frac{2}{\alpha}$. Since $k_2\alpha \leq
  1-2\alpha$, we see that $\frac{1}{k_2\alpha}$ is the largest charge
  that $vw$ could receive from either the second or third type of
  charge.  Thus, in this case the total cost charged to $vw$ by $v$ is
  at most $c_3$ times the LP-cost, where
  \[ c_3 = \frac{1}{(1-2k_3)(k_3-k_2)\alpha} + \frac{1}{k_2\alpha.} \]
  \smallskip

  Thus, the approximation ratio of the algorithm is at most
  $\max\{c_1, c_2, c_3\}$. We wish to choose the various parameters to
  make this ratio as small as possible, subject to the various
  assumptions on the parameters required for the correctness of the
  proof. It seems difficult to obtain an exact solution to this
  optimization problem. Solving the problem numerically, we obtained
  the following values for the parameters:
  \begin{align*}
    \alpha &= 0.465744 & \gamma &=0.0887449 \\
    k_1 &= 0.767566 & k_2 &= 0.117219 & k_3 &= 0.308433.
  \end{align*}
  These parameters yield an approximation ratio of roughly $48$.
\end{proof}

\section{A Rounding Algorithm for One-Sided Biclustering}\label{sec:algbc}
In this section, we consider a version of the $f$-Correlation
Clustering problem on complete bipartite graphs.  Let $G$ be a
complete bipartite graph with edges labeled $+$ and $-$, and let $V_1$
and $V_2$ be its partite sets. We will obtain a rounding algorithm
that transforms any fractional clustering $x$ into a discrete
clustering $\cee$ such that $\errvec(\cee)_v \leq c \errvec(x)_v$ for
all $v \in V_1$.  Our algorithm is shown in
Algorithm~\ref{alg:bip-round}.

Our algorithm does not guarantee any upper bound on $\errvec(\cee)_v$
for $v \in V_2$: as the algorithm treats the sides $V_1$ and $V_2$
asymmetrically, it is difficult to control the per-vertex error at
$V_2$. Nevertheless, an error guarantee for the vertices in $V_1$
suffices for some applications. Our approach is motivated by
applications in recommender systems, where vertices in $V_1$
correspond to users, while vertices in $V_2$ correspond to objects to
be ranked.  In this context, quality of service conditions only need
to be imposed for users, and not for objects.

\begin{algorithm*}
  \caption{Round fractional clustering to obtain a discrete clustering, using
    threshold parameters $\alpha, \gamma$ with
    $\alpha < 1/2$ and $\gamma < \alpha$.}
  \label{alg:bip-round}
  \begin{algorithmic}
    \STATE{Let $S = V(G)$.}
    \WHILE{$V_1 \cap S \neq \emptyset$}
    \STATE{For each $u \in V_1\cap S$, let $T_u = \{w \in S-\{u\} \st x_{uw} \leq \alpha\}$
      and let $T^*_u = \{w \in V_2 \cap S \st x_{uw} \leq \gamma\}$.}
    \STATE{Choose a pivot vertex $u \in V_1 \cap S$ that maximizes $\sizeof{T^*_u}$.}
    \STATE{Let $T = T_u$.}
    \IF{$\sum_{w \in V_2 \cap T}x_{uw} \geq \alpha\sizeof{V_2 \cap T}/2$}
    \STATE{Output the singleton cluster $\{u\}$.} \COMMENT{Type $1$ cluster}
    \STATE{Let $S = S-\{u\}$.}
    \ELSE
    \STATE{Output the cluster $\{u\} \cup T$.} \COMMENT{Type $2$ cluster}
    \STATE{Let $S = S - (\{u\} \cup T)$.}
    \ENDIF
    \ENDWHILE
    \STATE{Output each remaining vertex of $V_2 \cap S$ as a singleton cluster.}
  \end{algorithmic}
\end{algorithm*}
\begin{theorem}\label{thm:bip-appx}
  Let $G$ be a labeled complete bipartite graph with partite sets
  $V_1$ and $V_2$, let $\alpha,\gamma$ be parameters as
  described in Algorithm~\ref{alg:bip-round}, and let $x$ be any
  fractional clustering of $G$. If $\cee$ is the clustering produced
  by Algorithm~\ref{alg:bip-round} with the given input, then for all
  $v \in V_1$ we have $\errvec(\cee)_v \leq c\errvec(x)_v$, where $c$
  is a constant depending only on $\alpha$ and $\gamma$.
\end{theorem}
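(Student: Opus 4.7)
The plan is to follow the proof of Theorem~\ref{thm:complete-appx} closely, but restricted to bounding $\errvec(\cee)_v$ only for $v \in V_1$, and exploiting the bipartite structure of $G$. I would keep the same split into Type~1 and Type~2 cluster cases and the same strategy of charging each cluster-cost to the LP-cost of nearby edges. The key structural observation driving the argument is that since $G$ is bipartite, there are no edges between two vertices of $V_1$. In particular, when a Type~1 singleton cluster $\{u\}$ is output, with $u \in V_1 \cap S$, the only $V_1$-incident errors created are at $u$ itself, and no other $V_1$ vertex sees any new error. Consequently, the bad-pivot bookkeeping and bank-account charges from the complete-graph proof are unnecessary, and the cluster-cost at $u$ is handled exactly as in Case~1 of Theorem~\ref{thm:complete-appx}, using the Type~1 threshold condition $\sum_{w \in V_2 \cap T} x_{uw} \ge \alpha|V_2 \cap T|/2$.

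For Case~2 (Type~2 cluster $\{u\}\cup T$), I would separate the $V_1$-incident errors into three families. Internal negative errors $wz$ with $w \in V_1 \cap (\{u\} \cup T)$ and $z \in V_2 \cap T$ satisfy $1 - x_{wz} \ge 1 - 2\alpha$ by triangle inequality, so a factor of $1/(1 - 2\alpha)$ on the LP-cost pays for them. Positive cross-edges $wz$ leaving the cluster from $w \in V_1 \cap (\{u\} \cup T)$ are handled exactly as in the complete-graph Case~2: those with $x_{wz} \ge \gamma$ get a direct $1/\gamma$ charge, and those with $x_{wz} < \gamma$ are split on whether $x_{uw} \le k_1 \alpha$ (direct charge of $1/((1 - k_1)\alpha)$) or $x_{uw} > k_1 \alpha$ (redirect the charge onto the bipartite edges from $w$ to $T^*_u \subset V_2$ at rate $1/(k_1 \alpha - \gamma)$, using that $|T^*_w| \ge |X_w|$ and that $u$ was chosen to maximize $|T^*_u|$ over $V_1 \cap S$). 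Because $T^*_w$ is defined as a subset of $V_2 \cap S$, the maximality argument transfers verbatim from the complete-graph proof.

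The hard part will be the third family: positive cross-edges $vz$ with $v \in V_1 \cap (S - (\{u\} \cup T))$ and $z \in V_2 \cap T$. The complete-graph proof's corresponding ``standard argument'' (see \apxref{apx:details}) anchors on the LP-cost of the edge $uv$, but here $u, v \in V_1$ and $uv$ is not an edge of $G$, although $x_{uv} > \alpha$ is still a well-defined distance of the fractional clustering. My plan is to sum the triangle inequality $x_{vw} \ge x_{uv} - x_{uw}$ over $w \in V_2 \cap T$; combined with the Type~2 condition $\sum_{w \in V_2 \cap T} x_{uw} < \alpha|V_2 \cap T|/2$ and $x_{uv} > \alpha$, this yields $\sum_{w \in V_2 \cap T} x_{vw} > \alpha|V_2 \cap T|/2$. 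A CGW-style aggregate comparison then bounds the number of positive cross-edges at $v$ by a constant factor of the total LP-cost of all cross-edges at $v$, with a case-split on how much of the sum $\sum_w x_{vw}$ is contributed by the negative versus the positive cross-neighbors, giving a bound analogous to $\max\{1/(1 - 2\alpha),\ 2/\alpha\}$.

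Once all per-cluster charges are in place, I would accumulate the total charge made to each fixed edge $vw$ with $v \in V_1$ and split into the three subcases (same cluster, $v$ clustered first, $w$ clustered first) as in the proof of Theorem~\ref{thm:complete-appx}. Dropping the bank-account term, which is absent in the bipartite case, gives constants $c_1, c_2, c_3$ no larger than their complete-graph analogs, and numerical optimization over the remaining parameters $\alpha, \gamma, k_1$ yields the claimed constant $c$ depending only on $\alpha$ and $\gamma$.
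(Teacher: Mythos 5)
Your proposal matches the paper's proof in essentially all structural respects: you correctly identify that bipartiteness eliminates the need for the bad-pivot/bank-account mechanism (a Type~1 singleton $\{u\}$ with $u \in V_1$ generates new $V_1$-errors only at $u$ itself, since $u$ has no $V_1$-neighbors), you handle interior negative edges and interior-endpoint cross-edges exactly as the paper does, and you correctly observe that the $\sizeof{T^*_u}$-maximality argument transfers because $T^*_w$ is defined as a subset of $V_2 \cap S$. The paper's own proof is organized in precisely this way and arrives at the three constants $c_1 = \frac{1}{1-2\alpha} + \frac{1}{k_1\alpha - \gamma}$, $c_2 = \max\{\frac{1}{(1-k_1)\alpha}, \frac{1}{\gamma}, \frac{2}{\alpha}\}$, $c_3 = \max\{\frac{1}{1-2\alpha}, \frac{2}{\alpha}\}$, giving roughly $10$.

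One caution about the ``hard part.'' Your worry that $uv$ is not an edge of $G$ is unfounded: the fractional clustering $x$ is defined on all of $\binom{V(G)}{2}$, and the standard cross-edge lemma never charges the pair $uv$; it only uses $x_{uv}$ as an anchor for triangle inequalities, then charges the actual cross-edges $vz$. However, the intermediate inequality you propose, $\sum_{w \in V_2 \cap T} x_{vw} > \alpha\sizeof{V_2 \cap T}/2$, is not a lower bound on the LP-cost of the cross-edges for $v$: the LP-cost is $\sum_{w \in P} x_{vw} + \sum_{w \in Q}(1 - x_{vw})$ where $P$ and $Q$ are the positive and negative cross-neighbors, and the $Q$-contribution appears as $1-x_{vw}$, not $x_{vw}$. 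If you carry your stated inequality forward you will get stuck when $Q$ is large. The paper's Lemma~\ref{lem:bip-cross} instead uses both triangle inequalities $x_{vw} \ge x_{uv} - x_{uw}$ and $1-x_{vw} \ge 1 - x_{uv} - x_{uw}$ to obtain the lower bound $\sizeof{P}x_{uv} + \sizeof{Q}(1-x_{uv}) - \sum_{w} x_{uw}$, observes it is linear in $x_{uv}$, and checks the endpoints of $(\alpha, 1-\alpha)$, with a separate one-line bound when $x_{uv} \ge 1-\alpha$. Your phrase ``CGW-style aggregate comparison with a case-split'' does point toward this, so the gap is fixable, but the specific intermediate step you named should be discarded in favor of the two-sided triangle-inequality bound.
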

We note that the proof of Theorem~\ref{thm:bip-appx} is actually simpler
than the proof of Theorem~\ref{thm:complete-appx}, because the focus on
errors only at $V_1$ eliminates the need for the ``bad pivots'' argument
used in Theorem~\ref{thm:bip-appx}. This also leads to a smaller
value of $c$ in Theorem~\ref{thm:bip-appx} than we were
able to obtain in Theorem~\ref{thm:complete-appx}.
\begin{proof}
  As before, we make charges to pay for the new cluster costs at each
  vertex of $V_1$ as each cluster is output, splitting into cases
  according to the type of cluster.  Let $k_1$ be a constant to be
  determined, with $k_1\alpha > \gamma$.
  
  \caze{1}{A Type 1 cluster $\{u\}$ is output.} In this case, the only
  cluster costs incurred are the positive edges incident to $u$, all
  of which have their other endpoint in $V_2$. The averaging argument
  used in Case~1 of Section~\ref{sec:algcc} shows that charging every
  edge incident to $u$ a factor of $2/\alpha$ times its LP cost pays
  for the cluster cost of all such edges.
  
  \caze{2}{A Type 2 cluster $\{u\} \cup T$ is output.}
  Negative edges within the cluster are easy to pay for: if $w_1w_2$ is a negative edge within the cluster,
  with $w_i \in V_i$, then we have
  \[ 1 - x_{w_1w_2} \geq 1 - x_{uw_1} - x_{uw_2} \geq 1-2\alpha, \]
  so we can pay for the cluster-cost of such an edge by charging it a factor of $1/(1 - 2\alpha)$ times its LP-cost.

  We still must pay for positive edges joining the cluster with the rest
  of $S$; we call such edges \emph{cross-edges}. Each such edge must be
  paid for at its endpoint in $V_1$.

  If $z \in V_1$ is a vertex outside the cluster, then a standard
  argument (see \apxref{apx:details}) shows that the cross-edges for $z$ can be paid
  for by charging each such edge a factor of $\max\{1/(1-2\alpha),
  2/\alpha)\}$ times its LP cost.

  Now let $w \in V_1$ be a vertex inside the cluster. We must pay for
  the cross-edges incident to $w$ using the LP-cost of the edges
  incident to $w$. First consider the positive edges from $w$ to
  vertices $z$ outside the cluster such that $x_{wz} \geq \gamma$. Any
  such edge has cluster-cost $1$ and LP-cost at least $\gamma$, so
  charging each such edge a factor of $1/\gamma$ times its LP-cost
  pays for its cluster cost. Let $X = \{z \in (S \cap V_2)-T \st
  x_{wz} < \gamma\}$; we must pay for the edges $wz$ with $z \in
  X$. Note that $x_{uz} > \alpha$ for all $z \in X$, since $z \in X$
  implies $z \notin T$.

  If $x_{uw} \leq k_1 \alpha$, then for all $z \in X$, we have
  \[ x_{wz} \geq x_{uz} - x_{uw} \geq (1-k_1)\alpha. \]
  Hence, for any positive cross-edge $wz$ with $z \in X$, the LP-cost of $wz$ is at least $(1-k_1)\alpha$,
  and so we can pay for the cluster-cost of $wz$ by charging $wz$ a factor of $\frac{1}{(1- k_1)\alpha}$
  times its LP-cost.

  Now suppose $x_{uw} > k_1\alpha$. As before, we pay for the cross-edges by charging
  the edges inside the cluster. Observe that $\sizeof{T^*_w} \geq
  \sizeof{X}$. Since $u$ was chosen to maximize $\sizeof{T^*_{u}}$, this
  implies that $\sizeof{T^*_u} \geq \sizeof{X}$. For any $v \in T^*_u$, we have
  \[ x_{wv} \geq x_{uw} - x_{uv} \geq k_1\alpha - \gamma. \]
  On the other hand, for any $v \in T^*_u$ we also have
  \[ 1 - x_{wv} \geq 1 - x_{uw} - x_{uv} \geq 1 -
  \alpha - \gamma \geq \alpha - \gamma. \]
  Since $k_1 \leq 1$, it follows that the edge $wv$ has LP-cost at least
  $k_1\alpha - \gamma$ independent of whether $wv$ is positive or
  negative. Thus, the total LP cost of edges joining $w$ to $T^*_u$ is
  at least $(k_1\alpha - \gamma)\sizeof{T^*_u}$.

  Since the total cluster-cost of the cross- edges joining $w$ and $X$
  is at most $\sizeof{X}$ and since $\sizeof{T^*_u} \geq \sizeof{X}$, we
  can pay for the cross-edges by charging each edge $wv$ with $v \in
  T^*_u$ a factor of $\frac{1}{k_1\alpha - \gamma}$ times its
  LP-cost.

\medskip Having paid for all cluster-costs, we now look at the total
charge accrued at each vertex. Fix a vertex $v \in V_1$ and an edge
$vw$ incident to $v$. We bound the total amount charged to $vw$ by $v$
in terms of the LP-cost of $vw$. There are three distinct
possibilities for the edge $vw$: either $vw$ ended inside a cluster,
or $v$ was clustered before $w$, or $w$ was clustered before $v$.

\caze{1}{$vw$ ended within a cluster.} In this case, $v$ may have made the
following charges:
\begin{itemize}
\item A charge of at most $\frac{1}{1 - 2\alpha}$ times the LP cost, to pay for $vw$ itself if $vw$ is a negative edge,
\item A charge of $\frac{1}{k_1\alpha - \gamma}$ times the LP-cost, to pay for positive edges leaving the $v$-cluster.
\end{itemize}

Thus, in this case the total cost charged to $vw$ by $v$ is at most $c_1$ times the LP-cost of $vw$, where
\[ c_1 = \frac{1}{1-2\alpha} + \frac{1}{k_1\alpha - \gamma}. \]

\caze{2}{$v$ was clustered before $w$.} In this case, $v$ may have made the following charges:
\begin{itemize}
\item A charge of $2/\alpha$ times the LP cost, to pay for $vw$ if $v$ was output as a singleton,
\item A charge of $\max\{\frac{1}{(1-k_1)\alpha}, \frac{1}{\gamma}\}$ times the LP cost, to pay for $vw$ if $v$ was output in a nonsingleton cluster,
\end{itemize}
Since $v$ makes at most one of the charges above, the total cost charged
to $vw$ by $v$ is at most $c_2$ times the LP-cost of $vw$, where
\[ c_2 = \max\left\{\frac{1}{(1-k_1)\alpha},\ \frac{1}{\gamma},\ \frac{2}{\alpha}\right\}. \]

\caze{3}{$w$ was clustered before $v$.} In this case, $v$ may have made the following charges:
\begin{itemize}
\item A charge of at most $\max\{\frac{1}{1-2\alpha}, \frac{2}{\alpha}\}$ times
  the LP cost, to pay for cross-edges at $v$ if $w$ is output in a nonsingleton cluster.
\end{itemize}
Thus, in this case the total cost charged to $vw$ by $v$ is at most $c_3$ times the LP-cost of $vw$, where
\[ c_3 = \max\left\{\frac{1}{1-2\alpha},\ \frac{2}{\alpha}\right\}. \]
The approximation ratio is $\max\{c_1, c_2, c_3\}$. Numerically, we obtain an approximation ratio
of at most $10$ by taking the following parameter values:
\[
  \alpha = 0.377 \qquad \gamma = 0.102 \qquad k_1 = 0.730 \qedhere
\]
\end{proof}
\section{Acknowledgments}
The authors thank Dimitris Papailiopoulos for helpful discussions
\ifuseappendix that led to the example in \apxref{apx:minimax}\fi.
The authors also acknowledge funding from the NSF grants IOS 1339388 and CCF 1527636, 1526875, 1117980.
Research of the first author was supported by the IC Postdoctoral Program.
\ifjournal
\bibliographystyle{amsplain}
\else
\bibliographystyle{icml2016}
\fi
\bibliography{cluster,cluster1}
\ifuseappendix
\appendix\onecolumn
\section{Minimax Clustering and the Failure of Pivoting Algorithms}\label{apx:minimax}
In this appendix, we consider \emph{minimax clustering}, which is the
special case of $f$-Correlation Clustering where $f(y) = \max_{v \in
  V(G)}y_v$. Thus, in minimax clustering, we seek to minimize the
number of errors at the \emph{worst vertex} in the
clustering. Equivalently, we are trying to minimize the
$\ell^{\infty}$-norm of the error vector, in contrast to classical
correlation clustering, where we are trying to minimize the
$\ell^1$-norm. 

Minimax clustering is a representative example of the difficulties
which arise in moving from classical correlation clustering to the
more general $f$-Correlation Clustering problem. We will show that
some techniques which work well for the classical correlation clustering
problem break down in the minimax context.

Ailon, Charikar, and Newman~\cite{ACN1,ACN2} gave a beautifully simple
randomized $3$-approximation algorithm for classical correlation
clustering on complete graphs. Their algorithm is shown in
Algorithm~\ref{alg:ccpivot}. Since our rounding clustering in
Section~\ref{sec:algcc} is based on the Charikar--Guruswami--Wirth
algorithm with a modified pivoting rule, it is natural to ask whether
a similar modification to the Ailon--Charikar--Newman algorithm also
yields a constant-factor approximation algorithm for minimax
clustering.
\begin{algorithm}
\caption{Ailon--Charikar--Newman algorithm~\cite{ACN1,ACN2}.}
\label{alg:ccpivot}  
\begin{algorithmic}
\STATE{Let $S = V(G)$.}
\WHILE{$S \neq \emptyset$}
\STATE{Pick $v \in S$ uniformly at random.}
\STATE{Let $T = (\{v\} \cup N^+(v)) \cap S$.}
\STATE{Output the cluster $T$.}
\STATE{Let $S = S-T$.}
\ENDWHILE
\end{algorithmic}
\end{algorithm}

Unfortunately, it seems that there are severe obstacles to modifying
the ACN algorithm in this manner. For any positive integer $t$, let
$M_t$ be a graph on $2t$ vertices consisting of $t$ pairwise disjoint
edges, and let $G_t$ be the labeling of $K_{2t}$ in which the edges of $M_t$
are labeled $-$ and all other edges are labeled $+$.

Clearly, if all vertices of $G_t$ are placed in the same cluster (the
``giant clustering''), then there is only $1$ error at each vertex of
$G_t$. We show that all other clusterings of $G_t$ have many more
errors at some vertex.
\begin{lemma}\label{lem:matcherror}
  If $\cee$ is a clustering of $G_t$ with more than $1$ cluster, then some
  vertex of $G_t$ has at least $t-1$ errors in $\cee$.
\end{lemma}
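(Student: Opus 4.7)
The approach is a direct combinatorial counting argument that exploits the fact that $G_t$ is dominated by positive edges: only the $t$ matching edges of $M_t$ are negative, so almost every edge between a vertex and any other subset of vertices is positive, and therefore contributes an error whenever the subset crosses a cluster boundary. This lets us bound the error at any single vertex inside a small cluster.

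First, since $\cee$ partitions a set of $2t$ vertices into more than one part, pigeonhole forces at least one cluster $C$ with $|C| \leq t$ (the smallest cluster has size at most $\lfloor 2t/2 \rfloor$). I would fix any $v \in C$ and let $v'$ denote its unique matching partner in $M_t$, i.e., the unique negative neighbor of $v$ in $G_t$.

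Next, I would simply count positive edges from $v$ that leave $C$. The vertex $v$ has exactly $2t - |C|$ neighbors outside $C$. Among these, at most one (namely $v'$, and only if $v' \notin C$) is joined to $v$ by a negative edge, so at least $2t - |C| - 1$ of them are joined to $v$ by a positive edge. Each such edge crosses the boundary of $C$ while carrying a $+$ label, and therefore contributes $1$ to the error count at $v$. Combining with $|C| \leq t$ gives
\[ \errvec(\cee)_v \;\geq\; 2t - |C| - 1 \;\geq\; 2t - t - 1 \;=\; t-1, \]
which is the desired bound.

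There is essentially no serious obstacle here; the entire argument is pigeonhole plus a one-line count. The only point worth flagging is that one should \emph{not} attempt a case analysis on whether $v' \in C$ or $v' \notin C$: it is cleaner to simply discard the (at most one) possible error contribution from $v'$ and work solely with the ``guaranteed'' positive cross-edges. This also explains why the bound is natural — the lone negative neighbor of $v$ can save at most one error, so a cluster of size exactly $t$ containing $v$ with $v' \notin C$ attains $\errvec(\cee)_v = t-1$ exactly.
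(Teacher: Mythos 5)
Your proof is correct and is essentially the same argument as the paper's: identify a cluster of size at most $t$, fix a vertex $v$ in it, observe that $v$ has at most one negative neighbor outside the cluster (its matching partner), and conclude that at least $2t - |C| - 1 \geq t-1$ positive cross-edges incident to $v$ are errors.
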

\begin{proof}
  Let $X$ be the smallest cluster in $\cee$. Since $\cee$ has at least $2$ clusters,
  we have $\sizeof{X} \leq t$. For any $v \in X$, there is at most one $w \notin X$
  such that $vw$ is a negative edge. Hence, each $v \in X$ has at least $t-1$ incident
  errors.
\end{proof}
By Lemma~\ref{lem:matcherror}, any constant-factor randomized
algorithm for minimax clustering must return the giant clustering for
$G_t$ with probability $1 - O(1/t)$. On the other hand, if we modify
Algorithm~\ref{alg:ccpivot} by changing the rule for choosing the pivot
vertex $v$, the resulting algorithm still cannot produce the giant
clustering. It is difficult to see how Algorithm~\ref{alg:ccpivot}
could sensibly be modified in order to return the giant clustering for
$G_t$ with high enough probability.

We now consider the behavior of Algorithm~\ref{alg:round} on the graph
$G_t$. While the minimax objective function is not linear in the
variables $x_{uv}$, we can still model the $f$-Fractional Correlation
Clustering problem using the linear program $\LP$ shown in
Figure~\ref{fig:LP}.
\begin{figure*}
  \[\begin{aligned}
    \text{minimize } M\text{, subject to:} \\ 
    x_{uv} &\leq x_{uz} + x_{zv} &\quad\text{(for all distinct $u,v,z$)}\\
    \sum_{w \in N^+(v)}x_{vw} + \sum_{w \in N^-(v)}(1-x_{vw}) &\leq M &\quad\text{(for all $v \in V(G)$)}\\
    0 \leq x_e &\leq 1 &\quad\text{(for all $e \in E(G)$)}\\
    M &\in \reals \\
  \end{aligned}
  \]  
  \caption{LP formulation $\LP$ of $f$-Fractional Correlation Clustering, where $f(y) = \max_{v \in V(G)}y_v$.}
  \label{fig:LP}
\end{figure*}
\begin{figure*}
  \[\begin{aligned}
    \text{maximize } \sum_{v \in V(G)}d^-(v)\pi_v \text{, subject to:} \\
    -\pi_u - \pi_v + \hat\sigma_{u,v} &\leq 0 &&\quad\text{(for all $uv \in E^+(G)$)}\\
    \pi_u + \pi_v + \hat\sigma_{u,v} &\leq 0 &&\quad\text{(for all $uv \in E^-(G)$)}\\
    \sum_v \pi_v &\leq 1 \\
    \pi_z, \sigma_{u,v} &\geq 0 &&\quad {\text{(for all $z \in V(G)$ and all $u,v \in E(G)$)}}.
  \end{aligned}
  \]    
  \caption{Dual of $\LP$.}
  \label{fig:dual}
\end{figure*}

Since the algorithm presented in Section~\ref{sec:algcc} yields a
constant-factor approximation algorithm for minimax clustering, and
since every clustering of $G_t$ other than the giant clustering has
$t-1$ errors at some vertex, it is necessary that our rounding
algorithm, applied to an optimal solution of $\LP$, returns the giant
clustering for all sufficiently large $t$. This follows immediately
from the following result.
\begin{proposition}\label{prop:matching-sol}
  Let $\LP$ be the linear program shown in Figure~\ref{fig:LP},
  as formulated for $G_t$. If $t \geq 3$, then the unique optimal
  solution to $\LP$ has $x_{uv} = 0$ for all $uv \in E(G)$.
\end{proposition}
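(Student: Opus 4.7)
The plan is to prove both optimality and uniqueness simultaneously: I will show that any feasible solution $(x,M)$ of $\LP$ with $M \leq 1$ must have $x = 0$. Since $(x,M) = (0,1)$ is obviously feasible (each vertex has exactly one negative neighbor in $G_t$, so $\errvec(0)_v = 1$), this will imply that $M = 1$ is optimal and uniquely attained by $x = 0$.

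For notation, let $m(v)$ denote the matching partner of $v$ in $M_t$, so that $N^-(v) = \{m(v)\}$ and $N^+(v) = V(G_t) \setminus \{v, m(v)\}$. Suppose $(x,M)$ is feasible with $M \leq 1$. The per-vertex constraint at $v$ reads
\[ \sum_{w \neq v, m(v)} x_{vw} + (1 - x_{v, m(v)}) \leq M \leq 1, \]
which rearranges to the key inequality
\begin{equation*}
\sum_{w \neq v, m(v)} x_{vw} \leq x_{v, m(v)}. \tag{$\ast$}
\end{equation*}
In other words, each vertex's total ``positive weight'' is bounded by the weight on its single negative edge.

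Next I will apply the triangle inequality to the negative edge $v\,m(v)$ by routing through every other vertex: for each $u \notin \{v, m(v)\}$, we have $x_{v, m(v)} \leq x_{v,u} + x_{u, m(v)}$. Summing this over the $2t - 2$ choices of $u$ yields
\[ (2t-2)\, x_{v, m(v)} \leq \sum_{u \neq v, m(v)} x_{v,u} + \sum_{u \neq v, m(v)} x_{u, m(v)}. \]
By $(\ast)$ applied to $v$, the first sum is at most $x_{v, m(v)}$; by $(\ast)$ applied to $m(v)$, the second sum is at most $x_{m(v), v} = x_{v, m(v)}$. Thus $(2t-2)\,x_{v,m(v)} \leq 2\,x_{v,m(v)}$, i.e.\ $(2t-4)\,x_{v,m(v)} \leq 0$. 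For $t \geq 3$ this forces $x_{v, m(v)} = 0$, and then $(\ast)$ forces $x_{vw} = 0$ for all $w \neq v, m(v)$ as well. Hence $x \equiv 0$ is the unique optimal solution.

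There is really no major obstacle here; the only nonobvious ingredient is the idea of bounding $x_{v, m(v)}$ by simultaneously summing the $2t-2$ triangle inequalities that pass through it, and then recognizing that the two resulting row-sums are exactly the quantities controlled by $(\ast)$ at $v$ and $m(v)$. The hypothesis $t \geq 3$ is sharp in this argument, and that is expected: for $t = 2$ the graph $G_2$ is the four-cycle $C_4$ with alternating $+/-$ labels, which admits nontrivial optimal fractional clusterings.
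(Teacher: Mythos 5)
Your proof is correct, and it takes a genuinely different route from the paper. The paper constructs an explicit feasible solution to the LP dual (shown in Figure~\ref{fig:dual}) with objective value $1$, concludes optimality of the zero primal by weak duality, and then invokes complementary slackness to deduce that every positive edge must have $x_{uv}=0$ in \emph{any} optimum, from which the triangle inequalities force the negative edges to zero as well. You instead work entirely in the primal: the per-vertex constraint at $v$ rearranges to the bound $(\ast)$ on the row-sum of $x$ at $v$, and summing the $2t-2$ triangle inequalities that route the matching edge $v\,m(v)$ through a third vertex lets you pit $(2t-2)\,x_{v,m(v)}$ against the two row-sums at $v$ and $m(v)$, yielding $(2t-4)\,x_{v,m(v)}\leq 0$. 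Your argument is shorter, avoids writing down the dual program at all, and makes completely transparent why the threshold is $t\geq 3$ (the slack $2t-4$ vanishes at $t=2$, consistent with $G_2 = C_4$ having nontrivial optima). The paper's dual-solution approach is somewhat heavier here but mirrors the technique reused in the subsequent Proposition in Appendix~\ref{apx:maxagg}, where a clean primal argument is less apparent; so the paper gains uniformity of method across the two propositions, whereas yours gains brevity and insight for this one.
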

\begin{proof}
  The dual program to $\LP$ is shown in Figure~\ref{fig:dual},
  with the following variables:
  \begin{itemize}
  \item For each $v \in V(G_t)$, a variable $\pi_v$ corresponding
    to the constraint $\sum_{w \in
      N^+(v)}x_{vw} + \sum_{w \in N^-(v)}(1-x_{vw}) \leq M$,
  \item For each ordered triple $(u,v,z)$ where $u,v,z$ are
    distinct vertices of $V(G_t)$, a variable $\sigma_{(u,v,z)}$
    corresponding to the constraint $x_{uv} \leq x_{uz} + x_{zv}$.
  \end{itemize}
  For convenience of notation, we also introduce the abbreviation
  $\hat\sigma_{u,v}$ to stand for $\sum_{z \in V(G) - \{u,v\}}(-\sigma_{u,v,z} -\sigma_{v,u,z} 
  + \sigma_{z,u,v} + \sigma_{z,v,u} + \sigma_{u,z,v} + \sigma_{v,z,u})$. Observe that there
  are exactly $2t-2$ choices of $z$ to sum over.

  Now we define a dual solution. Let $u'u''$ be an edge of the
  negative matching. Consider the dual
  solution defined below:
  \begin{align*}
    \pi_{u'} = \pi_{u''} &= 1/2, & \sigma_{u', u'', z} &= 1/(2t-2) \text{ for all $z \notin \{u', u''\}$}, \\
    \pi_{v} &= 0 \text{ for all $v \notin \{u', u''\}$}, & \sigma_{u,v,z} &= 0 \text{ if $(u,v) \neq (u', u'')$.}
  \end{align*}
  Clearly this solution has an objective value of $1$; we check that it is feasible for $t \geq 2$.
  If $uv$ is an edge containing neither of $\{u', u''\}$, then $\pi_u = \pi_v = 0$
  and $\hat\sigma_{u,v} = 0$, since every term of $\hat\sigma_{u,v}$ is $0$. The edge
  $u'u''$ is a negative edge with $\pi_{u'} = \pi_{u''} = 1/2$, and after eliminating all
  the zero terms, we have
  \[
    \hat\sigma_{u'u''} = \sum_{z \in V(G) - \{u,v\}}(-\sigma_{u',u'',z})
    = -\sum_{z \in V(G) - \{u,v\}}\frac{1}{2t-2}
    = -1.
    \]
  Thus, $\pi_{u'} + \pi_{u''} + \hat\sigma_{u',u''} \leq 0$, as
  required. Finally, if $uv$ is a positive edge with $u \in \{u',
  u''\}$, say if $u = u'$, then the only nonzero term of
  $\hat\sigma_{uv}$ is $\sigma_{u,u'',v}$, and we have $-\pi_u - \pi_v + \hat\sigma_{u,v} = -1/2 + 1/(2t-2) \leq 0$
  as required. The same argument holds if $u = u''$.

  Since this solution has an objective value of $1$, matching the
  primal objective when $x_{uv}=0$ everywhere, it is clearly
  optimal. Furthermore, if $t \geq 3$, then for each positive edge
  incident to $u'$ or $u''$, there is slack in the corresponding
  constraint of the dual problem.  By complementary slackness, this
  implies that in any optimal solution to $\LP$, we have $x_{u'v} =
  x_{u''v} = 0$ for all $v \in V(G) - \{u', u''\}$.  The triangle
  inequality constraints in $\LP$ then imply that in an optimal primal
  solution, $x_{uv}=0$ for all $uv \in E(G)$.
\end{proof}
\section{\maxagg{} for Classical and Minimax Clustering}\label{apx:maxagg}
In this paper, we have mainly focused on studying the \mindis{} formulation
of $f$-Correlation Clustering, where we seek to minimize an objective
function related to the clustering errors in a candidate solution, and where
a $c$-approximation algorithm is an algorithm whose total error weight is at most
$c$ times the optimal weight.

An alternative formulation to \mindis{} is \maxagg{}, where we instead
seek to \emph{maximize} some function related to the edges that are
\emph{not} errors. In classical correlation clustering, this means
that we want to maximize the number of edges which are correct. In
minimax clustering, we wish to maximize the number of correct edges
at the vertex with the fewest correct edges. In both cases, an optimal
solution to \mindis{} is also an optimal solution to \maxagg{}, but their
approximation properties differ.

In the classical case, there is a trivial $2$-approximation algorithm
for \maxagg{} on arbitrary graphs: we can simply choose the better of
clustering with all vertices in separate clusters and the clustering
with all vertices in the same cluster. All negative edges are correct
in the first clustering and all positive edges are correct in the
second clustering, so taking the better of the two yields a clustering
with at least half the edges correct, which is clearly at least half
the value of an optimal clustering. Less trivially, Bansal, Blum, and
Chawla~\cite{BBC1,BBC2} gave a PTAS for \maxagg{}, so that any
approximation ratio greater than $1$ is achievable. In contrast, the
best approximation ratio known for \mindis{} on arbitrary graphs has a
ratio of $\log n$.

It is natural to ask whether some algorithm can also be found to
approximate \maxagg{} in the minimax context. The trivial
$2$-approximation algorithm no longer works, since if $G$ both has
vertices of high positive degree and high negative degree, then each
of the ``extreme'' clusterings will cause a large number of errors at
some vertex. We have not been able to find any constant-factor
approximation algorithm for the \maxagg{} formulation of minimax
clustering, even with the additional assumption that $G$ is a labeled
complete graph.

We now construct a graph which seems to be a good example of the
difficulties in designing an algorithm for this problem. For any $n$,
let $G_n$ be the complete graph on $n+1$ vertices, and fix some vertex
$u^* \in V(G_n)$. All edges incident to $u^*$ are labeled $+$, while
all other edges are labeled $-$. Thus, $u^*$ has positive degree $n$,
while all other vertices have positive degree $1$.

It is clear that only one type of integer clustering could be optimal:
cluster $u^*$ with some number $t$ of the remaining vertices, and
cluster all other vertices as singletons. This yields $t$ correct
edges at $u^*$, $n-t+1$ correct edges at each vertex clustered with
$u^*$, and $n-1$ correct edges at each singleton vertex. Thus, the
optimal clustering has $\floor{(n+1)/2}$ correct edges at its worst
vertex.

The following result demonstrates why algorithms based on LP rounding
are likely to have trouble finding a good clustering of $G_n$ under
the \maxagg{} objective. We reuse the LP formulation of \mindis{}
shown in Figure~\ref{fig:LP}; this is valid because when we seek an
exact solution, minimizing $M$ in Figure~\ref{fig:LP} is equivalent to
maximizing $\sizeof{V(G)}-1-M$, the weight of the correct edges at the worst
vertex.
\begin{proposition}
  Let $\LP$ be the linear program shown in Figure~\ref{fig:LP},
  as formulated for $G_n$. If $n \geq 2$, then
  the unique optimal solution to $\LP$ has $x_{u^*v} = 1/3$ for
  all $v \neq u^*$ and $x_{vw} = 2/3$ for all $vw \in E(G_n-u^*)$.
\end{proposition}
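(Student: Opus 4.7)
The plan is to parallel the duality argument used in Proposition~\ref{prop:matching-sol}: construct an explicit dual solution that matches the claimed primal objective, then extract uniqueness from complementary slackness. First I would verify that the claimed primal is feasible and attains $M = n/3$. The only nontrivial triangle inequalities are $x_{vw} \leq x_{u^*v} + x_{u^*w} = 1/3 + 1/3$, which are tight; the error at $u^*$ is $n\cdot(1/3) = n/3$, and at each $v \neq u^*$ it is $1/3 + (n-1)\cdot(1 - 2/3) = n/3$, so $M = n/3$.

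Next, exploiting the symmetry of $G_n$ under permutations fixing $u^*$, I would look for a dual solution of the form $\pi_{u^*} = c_0$, $\pi_v = c_1$ for $v \neq u^*$, and $\sigma_{v,w,u^*} = \sigma_{w,v,u^*} = s$ for every negative edge $vw$, with all other $\sigma$'s zero. Since $d^-(u^*) = 0$ and $d^-(v) = n-1$ otherwise, the dual objective is $n(n-1)c_1$, so matching the primal forces $c_1 = 1/(3(n-1))$. Tracing through the definition of $\hat\sigma$, one finds $\hat\sigma_{v,w} = -2s$ for a negative edge $vw$ (only the terms $-\sigma_{v,w,u^*} - \sigma_{w,v,u^*}$ survive), while for a positive edge $u^*v$ the only nonzero terms are $\sigma_{z,v,u^*}$ and $\sigma_{v,z,u^*}$ as $z$ ranges over $V - \{u^*,v\}$, giving $\hat\sigma_{u^*,v} = 2(n-1)s$. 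Making both dual edge constraints tight then requires $s = c_1$ and $c_0 + c_1 = 2(n-1)s$, which yield $s = 1/(3(n-1))$ and $c_0 = (2n-3)/(3(n-1))$; one checks directly that $\sum_v \pi_v = 1$ as well. Weak duality then certifies optimality of both the primal and the dual.

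For uniqueness, I would apply complementary slackness. Since every $\pi_v$ is strictly positive, every vertex-error inequality in the primal must be tight; and since $\sigma_{v,w,u^*} > 0$ for every negative edge $vw$, the triangle inequality $x_{vw} \leq x_{u^*v} + x_{u^*w}$ must hold with equality. Writing $a_v = x_{u^*v}$, this gives $x_{vw} = a_v + a_w$ for all $v,w \neq u^*$, the tight constraint at $u^*$ gives $\sum_v a_v = M$, and the tight constraint at each $v \neq u^*$ reduces, after substituting $x_{vw} = a_v + a_w$, to $(3-n)a_v = 2M - (n-1)$. For $n \geq 4$ this linear relation uniquely pins down each $a_v$, which together with $\sum_v a_v = M$ forces $a_v = 1/3$ and $M = n/3$, yielding the claimed primal; the small cases should be handled by direct computation. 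The main obstacle is the delicate dual bookkeeping: the same $\sigma$-mass on triples $(v,w,u^*)$ must drive $\hat\sigma_{v,w}$ downward and $\hat\sigma_{u^*,v}$ upward by precisely the right amounts, and the normalization $\sum_v \pi_v \leq 1$ must hold as equality given the values of $c_0$ and $c_1$, so there is essentially no slack in the construction.
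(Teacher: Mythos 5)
Your proposal takes essentially the same route as the paper: the same dual certificate obtained via the symmetry ansatz, weak duality to certify optimality, and complementary slackness to extract uniqueness from the positivity of $\pi_v$ and $\sigma_{v,w,u^*}$. The one place where you go further than the paper is the correct observation that the relation $(3-n)a_v = 2M - (n-1)$ pins down $a_v$ only when $n \neq 3$; this is a genuine subtlety that the paper's own proof passes over silently. Where you go slightly astray is the closing remark that ``the small cases should be handled by direct computation.'' For $n = 2$ the coefficient $3-n = 1 \neq 0$, so the argument you already gave applies verbatim and no special handling is needed. For $n = 3$, however, no direct computation will rescue the claim, because the statement is in fact \emph{false} there: with vertices $u^*, v_1, v_2, v_3$, taking $x_{u^*v_1} = x_{u^*v_2} = 1/2$, $x_{u^*v_3} = 0$, and $x_{v_iv_j} = x_{u^*v_i} + x_{u^*v_j}$ yields a feasible point, distinct from the claimed one, in which every vertex has error exactly $1 = n/3$, so the optimum is not unique. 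The correct hypothesis is therefore $n \neq 3$ (equivalently $n = 2$ or $n \geq 4$), under which your proof is sound; the paper's proof shares the same gap but does not flag it.
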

\begin{proof}
  In the proposed solution, we have $M = n/3$. To show that this
  solution is optimal and unique, we construct a solution to the dual
  program shown in Figure~\ref{fig:dual}, as in the proof of
  Proposition~\ref{prop:matching-sol}. Consider the dual solution
  defined by
  \begin{align*}
    \pi_{u^*} &= 1 - \frac{n}{3(n-1)}, & \sigma_{v,w,u^*} &= \frac{1}{3(n-1)}\text{ for all $vw \in E(G_n-u^*)$} \\
    \pi_{v} &= \frac{1}{3(n-1)}\text{ for all $v \neq u^*$}, & \sigma_{v,w,z} &= 0 \text{ if $z \neq u^*$.}
  \end{align*}
  Since $d^-(u^*) = 0$ and $d^-(v) = n-1$ for all $v \neq u^*$,
  the objective value of this solution is $n/3$. Thus, if this
  solution is feasible, then it is optimal. 
  
  To see that this solution is feasible, we observe that for
  $v,w \neq u^*$, we have $\hat\sigma_{v,w} = -\sigma_{v,w,u^*} - \sigma_{w,v,u^*} = -(\pi_v + \pi_w)$, so that $\pi_v + \pi_w + \hat\sigma_{v,w} \leq 0$ for all
  negative edges $vw$, as needed. On the other hand, for $v \neq u^*$
  we have
  \[ \hat\sigma_{u^*,v} = \sum_{z \notin \{u^*,v\}}(\sigma_{v,z,u^*} + \sigma_{z,v,u^*}) = 2(n-1)\pi_v. \]
  Since $\pi_{u^*} = 1 - n\pi_v$, this implies that
  \[ -pi_{u^*} - \pi_{v} + \hat\sigma_{u^*,v} = -(1 - n\pi_v) - \pi_v + 2(n-1)\pi_v = (n-1)\pi_v - 1 + 2(n-1)\pi_v = 0, \]
  so that $-\pi_{u^*} - \pi_v + \hat\sigma_{u^*,v} \leq 0$ for all positive
  edges $u^*v$, as needed. Since also $\sum_v \pi_v = 1$, we see that
  the proposed dual solution is feasible, so the given primal solution
  is optimal.

  Now we argue that the given primal solution is the unique optimal
  solution. Let $x$ be any optimal primal solution. For each edge
  $vw \in E(G_n - u^*)$, the dual variable $\sigma_{v,w,u^*}$ is
  nonzero in the dual solution above, so by complementary slackness we
  have $x_{vw} = x_{u^*v} + x_{u^*w}$.  Furthermore, since each
  $\pi_v > 0$, each $v \neq u^*$ must have total error weight equal to
  $M$, again by complementary slackness. Therefore, for each $v \neq u^*$, we have
  \begin{align*}
  M = \sum_{w \in N^+(v)}x_{vw} + \sum_{w \in N^-(v)}(1-x_{vw})
  &= x_{u^*v} + \sum_{w \notin \{v,u^*\}}(1 - (x_{u^*v} + x_{u^*w}))\\
  &= (n-1) - (n-3)x_{u^*v} - \sum_{w \neq u^*}x_{u^*w}.
  \end{align*}
  This implies that $x_{u^*v} = x_{u^*w}$ for all $v \neq w$. Letting
  $p$ denote this common value, we have $M = (n-1) - (n-3)p - np
  = (n-1) - (2n-3)p$. On the other hand, since $\pi_{u^*} > 0$,
  we also have
  \[
  M = \sum_{w \in N^+(u^*)}x_{u^*w} = np.
  \]
  Thus, $(n-1) - (2n-3)p = np$, which implies that $p=1/3$.
  Hence, in any optimal solution we have $x_{u^*v} = 1/3$ for 
  all $v \neq u^*$ and $x_{vw} = 2/3$ for all $vw \in E(G_n - u^*)$,
  as desired.
\end{proof}
Thus, the only optimal solution to the natural LP rounding is highly
symmetric, but the natural symmetric clusterings of $G_n$ -- into
either all singletons or into one giant cluster -- both have at most
$1$ correct edge at the worst vertex, which is far short of the
optimum value of $\floor{n/2}$ correct edges. We note that this does
not pose a problem for the \mindis{} formulation: in a
$c$-approximation for \mindis{}, we only promise that the generated
clustering has at most $c\ceil{n/2}$ errors at its worst vertex, and
if $c > 2$, then any clustering at all meets this guarantee.
\section{NP-Completeness of Minimax Clustering on Complete Graphs}\label{apx:npc}
To show that minimax clustering is NP-hard on complete graphs, we use
a reduction from the Partition-into-Triangles problem, originally stated
in \cite{garey} and attributed to Schaefer.
\begin{quote}
  \textbf{Partition into Triangles}\\
  \textbf{Input:} A graph $G$ with $\sizeof{V(G)} = 3q$ for some integer $q$. \\
  \textbf{Question:} Is there a partition of $V(G)$ into $q$ sets $V_1, \ldots, V_q$
  such that each set $V_i$ induces a triangle in $G$?
\end{quote}
Specifically, we reduce from the $4$-regular case:
\begin{theorem}[van Rooij, van Kooten Niekerk, Bodlaender~\cite{partition}]
  Partition into Triangles on $4$-regular graphs is NP-complete.
\end{theorem}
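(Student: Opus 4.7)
The plan is to handle the two halves of NP-completeness separately. Membership in NP is immediate: given a candidate partition $V_1,\ldots,V_q$, we verify in polynomial time that $\sizeof{V_i}=3$ and that the three edges of $G[V_i]$ are present for each $i$. The substantive content is NP-hardness, which I would establish by a polynomial-time reduction from the general Partition-into-Triangles problem (known to be NP-complete by Schaefer's result, as recorded in Garey and Johnson). Given an arbitrary instance $G$ with $\sizeof{V(G)}=3q$, the goal is to construct a $4$-regular graph $G'$ with $\sizeof{V(G')}$ a multiple of $3$ such that $G'$ has a triangle partition if and only if $G$ does.

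The construction proceeds by local degree-correction. For each vertex $v \in V(G)$ of degree $d(v)$, replace $v$ by a ``hub gadget'' $H_v$ consisting of $d(v)$ copies $v_1,\ldots,v_{d(v)}$, one per edge incident to $v$, connected by an ``equality gadget'' that forces all copies of $v$ to be treated uniformly by any triangle partition (either every $v_i$ participates in a triangle coming from an edge of $G$, or none does). Each edge $uv \in E(G)$ is then realized as an edge between a copy of $u$ and a copy of $v$. Vertices that are left with degree less than $4$ are padded using fixed ``filler triangles'' of fresh vertices arranged to bring every vertex up to degree exactly $4$; vertices with too-high degree inside the gadget are split further by the same copying trick. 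The filler triangles should be self-contained, meaning that they always admit a unique triangle partition on their own, contributing nothing to the combinatorics of the original problem.

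The main obstacle, and the part requiring real care, is designing the equality gadget and the filler triangles so that (i) the overall graph is exactly $4$-regular, (ii) no ``cheating'' triangle partition exists that glues gadget vertices to edge-representatives in unintended ways, and (iii) the bijection between triangle partitions of $G$ and those of $G'$ is polynomial-time computable in both directions. A workable template is to let the equality gadget for $v$ of degree $d$ consist of a short cycle of auxiliary vertices attached to each $v_i$ by a forced edge, together with enough extra filler vertices so that the only legal triangle partitions of $H_v$ either cover all of $\{v_1,\ldots,v_{d(v)}\}$ by external triangles simultaneously, or cover none of them externally and instead pair them up internally with the auxiliary vertices. Verifying property (ii) is the technically delicate step and would require an exhaustive case analysis on how the gadget vertices can be triangulated.

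Once the reduction is in place, correctness follows by arguing in both directions: a triangle partition $V_1,\ldots,V_q$ of $G$ lifts to one of $G'$ by replacing each $V_i=\{u,v,w\}$ with the triangle on the corresponding copies and using the ``external'' mode of each hub gadget plus the canonical partition of all filler triangles; conversely, a triangle partition of $G'$ restricted to the edge-representative vertices projects, via the uniformity enforced by the equality gadgets, to a triangle partition of $G$. Since every gadget is of constant size, the reduction is polynomial, and combined with NP membership this yields NP-completeness of Partition into Triangles on $4$-regular graphs.
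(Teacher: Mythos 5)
There is a genuine gap, and it is structural rather than merely a matter of missing detail. Your reduction replaces each vertex $v$ by one copy per incident edge and realizes each edge $uv\in E(G)$ as a single edge between a copy of $u$ and a copy of $v$. But Partition into Triangles is not an edge-constraint problem: a triangle $\{u,v,w\}$ of $G$ uses \emph{two} edges at each of its vertices, so under your scheme its three edges land on six pairwise-distinct copy vertices and do not form a triangle in $G'$ at all. Consequently the forward direction of your correctness argument (``replacing each $V_i=\{u,v,w\}$ with the triangle on the corresponding copies'') has no triangle to point to, and no ``equality gadget'' enforcing uniform behaviour of the copies can repair this, because a triangle of $G'$ has only three vertices and cannot contain two copies of the same original vertex together with copies of its two triangle-partners. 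Any degree-reduction for this problem must therefore preserve (or re-encode) triples of mutually adjacent vertices, not just edges; this is exactly why the known hardness proofs for bounded-degree Partition into Triangles use bespoke gadgets built around the intended triangles (typically reducing from Exact Cover by 3-Sets with bounded occurrence) rather than generic vertex splitting. In addition, the gadget you do invoke is never constructed --- you explicitly defer the case analysis that would establish property (ii) --- so even setting aside the splitting issue, the argument is a plan rather than a proof.

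For comparison: the paper does not prove this statement by a new reduction at all. It cites van Rooij, van Kooten Niekerk, and Bodlaender, noting that the $4$-regular case follows immediately from two of their results, namely NP-hardness on graphs of maximum degree at most $4$ together with a polynomial-time transformation of any maximum-degree-$4$ instance into an equivalent $4$-regular instance. If you want a self-contained proof, the honest route is to reproduce (or cite) those two ingredients; the regularization step there is itself a careful gadget construction that adds self-contained triangle components to raise degrees without creating unintended triangles, which is precisely the part your sketch leaves unresolved.
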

(Although this is not explicitly stated in \cite{partition}, it
follows immediately from two of their results: that the problem is
NP-hard on graphs of maximum degree at most $4$, and that every
partition-into-triangles instance with maximum degree at most $4$ can
be transformed in polynomial time into an equivalent $4$-regular
instance.)

To prove that minimax clustering is NP-hard, we use the following reformulation,
which is more convenient for our purposes.
\begin{quote}
  \textbf{$t$-Perfect Clustering}\\
  \textbf{Input:} A labeled complete graph $G$ together with a tolerance $t_v \in \pints$
  for each $v \in V(G)$.\\
  \textbf{Question:} Does $G$ admit a $t$-perfect clustering, that is, a clustering
  such that each vertex $v$ has at most $t_v$ incident mistakes?
\end{quote}
Taking $\lambda_v = 1/t_v$, we see that $G$ has a $t$-perfect
clustering if and only if the minimax-clustering value of the
resulting weighted graph is at most $1$.

Our NP-completeness proof mimics the proof given by Bansal, Blum, and
Chawla for the classical correlation clustering problem. Let $G$ be a
$4$-regular graph on $n$ vertices, where $n \geq 7$, and let $G'$ be
the labeled complete graph on the same vertex set whose positive edges
are exactly the edges of $G$. Observe that $G$ has a partition into
triangles if and only if $G'$ has a clustering with all clusters of
size at most $3$ and exactly $2$ mistakes at each vertex. The idea is
to expand $G'$ into a larger labeled complete graph $H$ such that in
an optimal clustering of $H$, every cluster has at most three
$G'$-vertices.

We use essentially the same construction as Bansal--Blum--Chawla. Let
$H$ consist of $G'$, augmented as follows. For every $3$-set
$\{u,v,w\} \subset V(G')$, add to $H$ a clique $C_{uvw}$ with $7$
vertices. All edges within $C_{uvw}$ are positive, all edges from
$C_{uvw}$ to the vertices $\{u,v,w\}$ are positive, and all other
edges incident to $C_{uvw}$ are negative.

We assign the following tolerances: each original vertex $u \in G'$
has $t_u = 7({n-1\choose 2} - 1)+2$, and each added vertex $v \in H-G'$ has $t_v = 3$.
\begin{lemma}\label{lem:leq3}
  If $H$ has a $t$-perfect clustering $\cee$, then every cluster of
  $\cee$ contains at most three vertices of $G'$, and every cluster of $\cee$
  contains vertices from at most exactly one clique of $H-G'$.
\end{lemma}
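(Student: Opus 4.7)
The plan is to exploit the tight tolerance $t_v = 3$ at each added vertex to force a rigid interaction between clusters and the cliques $C_{xyz}$, and then to invoke the tolerance at original vertices.

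First I would establish a \emph{clique constraint}: if a cluster $K$ meets $C_{xyz}$ and we set $m = |C_{xyz} \cap K|$, $b = |\{x,y,z\} \cap K|$, and $r = |K| - m - b$, then $m + b - r \geq 7$. This follows by writing the mistake count at any $c \in C_{xyz} \cap K$ as $8 + |K| - 2p(c)$ with $p(c) = (m-1) + b$ and using $t_c = 3$. Two consequences are immediate: any cluster containing a clique vertex has $|K| \leq 13$ (since $r \leq m + b - 7 \leq 3$); and if $\{x,y,z\} \subseteq K$, then the \emph{dichotomy} $C_{xyz} \subseteq K$ or $C_{xyz} \cap K = \emptyset$ holds, because any $c \in C_{xyz} \setminus K$ would lie in a cluster $K'$ with $b' = 0$, forcing $m' = 7$ and contradicting $C_{xyz} \cap K \neq \emptyset$.

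The ``at most one clique per cluster'' claim follows quickly. If $K$ meets distinct cliques $C_{S_1}$ and $C_{S_2}$, then because these cliques are vertex-disjoint and lie outside $G'$, we have $m_2 \leq r_1$ and $m_1 \leq r_2$. The two clique constraints then give $m_1 + b_1 \geq 7 + m_2$ and $m_2 + b_2 \geq 7 + m_1$, and summing yields $b_1 + b_2 \geq 14$, contradicting $b_i \leq 3$.

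For the ``at most three $G'$-vertices'' claim, suppose $K$ contains $\ell \geq 4$ original vertices $u_1, \ldots, u_\ell$ along with $h$ clique vertices, so $|K| = \ell + h$. The tolerance $t_{u_i} = 7\binom{n-1}{2} - 5$ paired with positive degree $4 + 7\binom{n-1}{2}$ gives $2p(u_i) \geq |K| + 8$; summing over all $\ell$ yields $2\sum p(u_i) \geq \ell^2 + \ell h + 8\ell$. For the upper bound, the $G'$-to-$G'$ contribution is at most $4\ell$ by $4$-regularity of $G$, and each clique vertex $c \in C_T$ contributes $|T \cap \{u_1, \ldots, u_\ell\}| \leq 3$, with the maximum $3$ attained only if $T \subseteq \{u_1, \ldots, u_\ell\}$---in which case by the dichotomy $C_T \subseteq K$. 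Writing $f$ for the count of such special fully-included cliques, $\sum p(u_i) \leq 4\ell + 7f + 2h$; combining, $14f \geq \ell^2 + (\ell - 4)h \geq 16$, so $f \geq 2$. But then $|K| \geq \ell + 14 \geq 18$, contradicting $|K| \leq 13$. The main technical obstacle is precisely this counting: the upper bound must isolate the contribution from the ``special'' cliques $C_T$ with $T \subseteq \{u_1, \ldots, u_\ell\}$, the only ones where each clique vertex contributes the full $3$, and the dichotomy is what makes this separation rigorous.
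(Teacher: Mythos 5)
Your argument for ``at most one clique per cluster'' and ``at most three $G'$-vertices'' is correct, but it misses one direction of the lemma: as it is applied in the subsequent corollary, the lemma asserts that every cluster contains vertices from \emph{exactly} one clique of $H-G'$, so you must also rule out a cluster containing no clique vertices at all. The paper handles this explicitly: a cluster $X$ disjoint from $H-G'$ still contains some $v\in V(G')$, and all $7\binom{n-1}{2}$ positive clique-neighbors of $v$ lie outside $X$, exceeding the tolerance $7\binom{n-1}{2}-5$. Your own inequality $2p(u_i)\geq |K|+8$ recovers this (with no clique vertices $p(u_i)\leq 4$, forcing $|K|\leq 0$), but it needs to be stated.

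Beyond the gap, your route is correct but considerably heavier than the paper's. The paper proves ``at most one clique'' by a two-line case split on whether $\lvert X\cap C_1\rvert$ is $>3$ or $\leq 3$ (examining a single vertex of $C_2$ or $C_1$ respectively), and proves ``at most three $G'$-vertices'' by observing that if $X$ meets only one clique $C_{xyz}$, then any fourth $G'$-vertex of $X$ lies outside $\{x,y,z\}$, has no positive neighbor in $X\cap(H-G')$, and so again incurs $7\binom{n-1}{2}$ mistakes. Your development of the clique constraint $m+b-r\geq 7$, the dichotomy, the bound $\lvert K\rvert\leq 13$, and the quadratic inequality $14f\geq \ell^2+(\ell-4)h$ is a valid and self-contained alternative, but it introduces much more bookkeeping than the problem demands; it also makes the final contradiction somewhat indirect, since having already established ``at most one clique per cluster'' you could simply conclude $f\leq 1$ against $f\geq 2$ rather than passing through $\lvert K\rvert\geq 18$.
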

\begin{proof}
  First suppose that $\cee$ has a cluster $X$ containing vertices from
  two different cliques of $H-G'$.  Let $v_1, v_2$ belong to the
  cliques $C_1, C_2$ respectively. If $\sizeof{X \cap C_1} > 3$, then
  $v_2$ has more than $3$ incident mistakes, which exceeds its
  tolerance. On the other hand, if $\sizeof{X \cap C_1} \leq 3$, then
  since $\sizeof{C_1} = 7$, we have $\sizeof{C_1 - X} \geq 4$,
  so $v_1$ has at least $4$ incident mistakes, which again exceeds
  its tolerance. Thus, if $\cee$ is $t$-perfect, then every cluster
  contains vertices from at most one clique.

  Now suppose that $\cee$ has a cluster $X$ that does not contain
  vertices from any clique of $H-G'$. Since clusters are nonempty, $X$
  contains a vertex $v \in V(G')$. Since $v$ has $7{n-1 \choose 2}$
  neighbors in $V(H-G')$ and is not clustered with any of them, $v$
  has at least $7{n-1 \choose 2}$ incident mistakes, which exceeds its
  tolerance of $7{n-1 \choose 2} - 5$.

  Finally, suppose that $\cee$ has some cluster $X$ with at least four $G'$-vertices.
  Since $X$ contains vertices from at most one clique of $H-G'$, there is
  some vertex $v \in V(G') \cap X$ does not have any positive neighbors
  in $X \cap V(H-G')$. Since $v$ has a total of $7{n-1 \choose 2}$ positive
  neighbors in $H-G'$, it again follows that $v$ has at least $7{n-1 \choose 2}$
  incident mistakes, exceeding its tolerance.
\end{proof}
\begin{corollary}
  $H$ has a $t$-perfect clustering if and only if $G$ has a partition
  into triangles.
\end{corollary}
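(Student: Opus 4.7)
For the forward direction, I would take a partition $T_1, \ldots, T_q$ of $V(G)$ into triangles and form the clustering of $H$ that merges each triangle $T_i$ with its associated $7$-clique $C_{T_i}$ into one cluster, leaving every other $C_{uvw}$ as its own cluster. A direct vertex-by-vertex check verifies $t$-perfection. An original vertex $u \in T_i$ has $2$ positive $G$-neighbors outside its cluster (since $G$ is $4$-regular) together with exactly $7(\binom{n-1}{2} - 1)$ positive clique-neighbors outside its cluster (the vertices of the cliques $C_{u'v'w'} \neq C_{T_i}$ whose triple contains $u$), for a total of $t_u$ mistakes. A clique vertex in a merged cluster has $0$ mistakes, and a clique vertex in a standalone clique $C_{\{a,b,c\}}$ has exactly $3$ mistakes (the positive edges to $\{a,b,c\}$), matching $t_v = 3$.

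For the backward direction, suppose $H$ admits a $t$-perfect clustering $\cee$. By Lemma~\ref{lem:leq3}, every cluster $X$ meets $V(G')$ in at most three vertices and meets exactly one clique of $H - G'$, say $C_{abc}$. The plan is to show that whenever $X$ meets $V(G')$, in fact $X = \{a,b,c\} \cup C_{abc}$ with $\{a,b,c\}$ a triangle of $G$. Fix such an $X$, choose $u \in X \cap V(G')$, and write $Y = X \cap C_{abc}$ and $\{u\} \cup S = X \cap V(G')$, so $\sizeof{S} \leq 2$ and $\sizeof{Y} \geq 1$.

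I would first rule out $u \notin \{a,b,c\}$: in that case every one of the $7\binom{n-1}{2}$ vertices in the cliques whose triple contains $u$ lies outside $X$ and is positive to $u$, and moreover $u$ has at least $\sizeof{Y} \geq 1$ negative neighbors inside $Y$; the mistake count already exceeds $t_u = 7\binom{n-1}{2} - 5$. Hence $u \in \{a,b,c\}$. Summing positive mistakes to $G$-neighbors outside $X$, positive mistakes to clique-neighbors outside $X$, and negative mistakes to the vertices of $S$ that are not $G$-adjacent to $u$, the requirement that $u$ has at most $t_u$ mistakes simplifies to
\[ \sizeof{Y} + 2\sizeof{S \cap N_G(u)} \geq 9 + \sizeof{S}. \]
Combined with the trivial bounds $\sizeof{Y} \leq 7$ and $\sizeof{S \cap N_G(u)} \leq \sizeof{S} \leq 2$, this forces all three bounds to be tight simultaneously, so $\sizeof{Y} = 7$, $\sizeof{S} = 2$, and $S \subseteq N_G(u)$. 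Applying the same analysis to each of the three $G'$-vertices in $X$ yields that all three lie in $\{a,b,c\}$ and are pairwise $G$-adjacent, so $X \cap V(G') = \{a,b,c\}$ is a triangle of $G$. Since every $G'$-vertex lies in exactly one cluster, ranging over the clusters meeting $V(G')$ gives a partition of $V(G')$ into triangles of $G$.

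The main obstacle is the tight bookkeeping in the $u \in \{a,b,c\}$ case: the tolerance $t_u = 7(\binom{n-1}{2}-1)+2$ is calibrated so that the derived inequality is saturated only at the desired configuration, and care is needed to track positive versus negative mistakes accurately enough to pin each of $\sizeof{Y}$, $\sizeof{S}$, and $\sizeof{S \cap N_G(u)}$ to its extremal value. Once this is done, Lemma~\ref{lem:leq3} plus the case analysis immediately yields the equivalence with triangle partitions.
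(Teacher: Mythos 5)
Your proof is correct and uses the same core ingredients as the paper's (Lemma~\ref{lem:leq3} plus per-vertex mistake counting against the tolerances), but your backward direction is organized differently. The paper argues by contradiction at a single bad vertex: if the restriction of $\cee$ to $V(G')$ is not a triangle partition, some $v \in V(G')$ is clustered with at most one of its positive $G$-neighbors, yielding at least $3$ mistakes in $G'$ plus at least $7\bigl(\binom{n-1}{2}-1\bigr)$ mistakes to clique vertices, exceeding $t_v$. You instead derive the sharper structural statement that every cluster $X$ meeting $V(G')$ must equal $\{a,b,c\} \cup C_{abc}$ with $\{a,b,c\}$ a $G$-triangle, by showing the mistake bound at $u \in \{a,b,c\}$ forces $\sizeof{Y}=7$, $\sizeof{S}=2$, and $S \subseteq N_G(u)$ simultaneously. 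Your tight-inequality accounting checks out (the reduced inequality $\sizeof{Y}+2\sizeof{S\cap N_G(u)} \ge 9 + \sizeof{S}$ is correct and indeed saturable only in the desired configuration), and the extra precision also pins down $Y = C_{abc}$, which the paper does not bother to establish. Your forward direction matches the paper's exactly; the only nit is that you should note explicitly that $u$ incurs no negative mistakes inside its merged cluster (the vertices of $T_i$ are $G$-adjacent to $u$ and those of $C_{T_i}$ are positive neighbors of $u$), which your tally silently assumes.
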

\begin{proof}
  First suppose that $V_1, \ldots, V_k$ is a partition of  $G$ into triangles. Cluster $H$
  as follows: for $i \in [k]$, let $X_i = V_i \cup C_{V_i}$, where $C_{V_i}$ is the clque
  of $H$ with vertex set $V_i$. For every clique $C$ that is not equal to some $V_i$,
  cluster $C$ on its own.

  Each $v \in V(G')$ has exactly $7({n-1 \choose 2}-1) + 2$ mistakes:
  among the $7{n-1\choose 2}$ postive edges to vertices of $H-G'$, it
  is clustered with exactly $7$ of them, and among its $4$ positive
  neighbors in $G$, it is clustered with exactly $2$ of them (and with
  no negative neighbors), since $V_1, \ldots, V_k$ is a partition of
  $G$ into triangles. Furthermore, each $v \in V(H - G')$ has at most
  $3$ mistakes, since this clustering has no mistakes within $H-G'$
  and does not cluster any $w \in V(C_{xyz})$ with a vertex outside of
  $\{x,y,z\}$. Thus, the clustering is $t$-perfect.

  Now suppose that $H$ has a $t$-perfect clustering $\cee$. By
  Lemma~\ref{lem:leq3}, every cluster of $\cee$ contains at most three
  vertices of $G$ and contains vertices from exactly one cluster
  $C_{uvw}$ of $V(H-G')$.  We claim that the restriction of $\cee$ to
  $V(G')$ is a partition of $G$ into triangles. If not, some vertex $v
  \in V(G')$ is clustered with fewer than $2$ of its positive
  neighbors, and therefore has at least $3$ incident mistakes in
  $G'$. Since the cluster containing $v$ contains vertices from only
  one of the cliques containing $v$, we see that $v$ also has at least
  $7({n-1 \choose 2}-1)$ incident mistakes to vertices of $V(H'-G)$,
  for at total of at least $7({n-1\choose 2}-1)+3$ incident mistakes.
  This exceeds its tolerance, contradicting the hypothesis that $\cee$
  is $t$-perfect.
\end{proof}
\section{NP-Completeness on Complete Bipartite Graphs}\label{apx:bip-npc}
In this section, we show that ``one-sided'' minimax clustering on
complete bipartite graphs is NP-hard. This complements the
approximation algorithm given in Section~\ref{sec:algbc} for the same
problem. Our proof is similar to the proof of
Amit~\cite{amit2004bicluster} which shows that biclustering with the
classical objective function is NP-hard, but requires significant
modifications to accomodate the new objective function. The proof uses
a reduction from the \emph{$3$-cover problem}, which is well-known to be
NP-complete~\cite{garey}.
\begin{quote}
  \textbf{$3$-Cover}\\
  \textbf{Input:} A ground set $U = \{u_1, \ldots, u_{3n}\}$ and a family of
  subsets $\sey = \{S_1, \ldots, S_p\}$ with each $\sizeof{S_i} = 3$. \\
  \textbf{Question:} Is there a subfamily $\sey' \subset \sey$
  such that each $u_i$ lies in exactly one element of $\sey'$?
\end{quote}
Given an instance of $3$-cover, we construct an instance of the
following problem:
\begin{quote}
  \textbf{One-Sided $t$-perfect Biclustering}\\
  \textbf{Input:} A labeled complete bipartite graph $G$ with partite
  sets $V_1, V_2$ and a tolerance $t_v \in \pints$
  for each $v \in V_1$.\\
  \textbf{Question:} Does $G$ have a clustering such that each vertex
  $v \in V_1$ has at most $t_v$ incident edges that are errors?
\end{quote}
By the same argument used in Appendix~\ref{apx:npc}, any algorithm
which exactly determines the optimal one-sided minimax clustering for
complete bipartite graphs would also solve the $t$-perfect
biclustering problem. Hence, it suffices to show that $t$-perfect
biclustering is NP-hard. Note also that one-sided minimax clustering
can be viewed as the special case of (two-sided) minimax clustering
for which $t_v = \sizeof{V_1}$ for all $v \in V_2$; thus, the
reduction in this section also shows that the two-sided version of the
problem is NP-hard.

Given a nontrivial instance of $3$-cover (that is, an instance with
$n,p \geq 1$), we construct an instance of $t$-perfect biclustering as
follows. For each $u_i \in U$, construct a pair of vertices $x_i \in
V_1$, $y_i \in V_2$. Call these vertices \emph{ground vertices}.  Each
edge $x_iy_j$ is positive if $u_i=u_j$ or if $u_i$ and $u_j$ lie in
some common triplet of $\sey$, and negative otherwise.

For each $S_i \in \sey$, we create a vertex $x(S_i) \in V_1$
and $m$ vertices $y_1(S_i), \ldots, y_m(S_i) \in V_2$, where
each $x_j(S_i) \in V_1$ and $y_j(S_i) \in V_2$, where $m \geq 6n+3p$
is some fixed constant. Call these vertices \emph{triplet vertices},
and let $B_i = \{x(S_i)\} \cup \{y_j(S_i) \st j \in \{1, \ldots, m\}\}$.
All edges $x(S_i)y_k(S_i)$ for a fixed $i$ are positive, and all
edges $x(S_i)y_k(S_{\ell})$ for $i \neq \ell$ are negative. For
$u_i \in U$, if $u_i \in S_j$, then the edges $x_iy_k(S_j)$ and
$y_ix(S_j)$ are positive, and otherwise these edges are
negative.

Finally, let $Z = \{z_1, \ldots, z_{3n}\}$ be new $V_2$-vertices, and
for each $z_i \in Z$, add positive edges to all ground-vertices in
$V_1$ and negative edges to all triplet-vertices in $V_1$. Call these
vertices \emph{dummy vertices}.

Next we determine the tolerances $t_v$. For $S_i \in \sey$, let
$t_{x(S_i)} = 3$.
For $u_i \in U$, the corresponding tolerances are computed more
intricately. Let $d(u_i)$ be the number of triplets
$S_j \in \sey$ containing $u_i$ and let $c(u_i)$ be the number
of $u_j \in U - \{u_i\}$ such that $u_j$ and $u_i$ lie in some common
triplet $S_j$. We define
\[ t_{x_i} = m(d(u_i) - 1) + (c(u_i) - 2) + (\sizeof{Z} - 3). \]
It is clear that $G$ and $t$ can be constructed in polynomial time.
\begin{lemma}\label{lem:disjoint}
  Suppose that $G$ has a $t$-perfect clustering $\cee$. For any $S_i, S_j \in \sey$
  with $i \neq j$, the vertices $x(S_i)$ and $x(S_j)$ lie in different clusters.
\end{lemma}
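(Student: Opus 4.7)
The plan is to argue by contradiction, exploiting the fact that the tolerance $t_{x(S_i)} = 3$ is very small while the sets $B_i$ and $B_j$ each contain $m \geq 6n+3p$ vertices of $V_2$. So I suppose that some cluster $C$ of $\cee$ contains both $x(S_i)$ and $x(S_j)$ with $i \neq j$, and I will count only the errors at $x(S_i)$ (and at $x(S_j)$) that come from the $V_2$-vertices inside $B_i \cup B_j$; this is enough to exceed the tolerance of $3$ on at least one side.

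Concretely, I would set $\alpha_i = \lvert C \cap \{y_1(S_i),\ldots,y_m(S_i)\}\rvert$ and define $\alpha_j$ analogously. By construction, every $y_k(S_i)$ is a positive neighbor of $x(S_i)$ and a negative neighbor of $x(S_j)$, and symmetrically for $y_k(S_j)$. Thus, restricting attention to edges from $x(S_i)$ to $B_i \cup B_j$, the positive edges $x(S_i)y_k(S_i)$ with $y_k(S_i) \notin C$ contribute $m - \alpha_i$ errors, and the negative edges $x(S_i)y_k(S_j)$ with $y_k(S_j) \in C$ contribute $\alpha_j$ errors. Since all other incident edges contribute a nonnegative number of additional errors, the $t$-perfect hypothesis forces
\[ (m - \alpha_i) + \alpha_j \leq t_{x(S_i)} = 3. \]
The symmetric calculation at $x(S_j)$ yields
\[ (m - \alpha_j) + \alpha_i \leq 3. \]

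Adding the two inequalities gives $2m \leq 6$, i.e.\ $m \leq 3$. Since the construction assumes $m \geq 6n + 3p$ and the reduction is from a nontrivial instance ($n,p \geq 1$), we have $m \geq 9$, a contradiction. Hence no cluster of $\cee$ can contain two distinct vertices $x(S_i)$ and $x(S_j)$, completing the proof.

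There is no real obstacle here: the labels on edges from $x(S_i)$ to $B_i \cup B_j$ are determined entirely by the construction, and the bound follows from a one-line counting argument. The only thing to be careful about is to ignore the (possibly numerous) errors contributed by ground vertices, dummy vertices, and the sets $B_\ell$ for $\ell \notin \{i,j\}$; these can only increase the left-hand sides, so dropping them preserves the inequalities and already forces a contradiction.
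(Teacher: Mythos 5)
Your proof is correct and takes essentially the same approach as the paper: both exploit the small tolerance $t_{x(S_i)} = t_{x(S_j)} = 3$ against the $m$ opposite-signed edges to $B_i \cup B_j$. The paper's version is slightly more asymmetric (it uses $t_{x(S_i)} = 3$ to conclude $X$ contains at least $m-3$ of the $y_k(S_i)$, then observes those alone give $x(S_j)$ at least $m-3 > 3$ errors), whereas you write down both tolerance inequalities and add them to get $2m \leq 6$; the two accountings are interchangeable and reach the same contradiction.
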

\begin{proof}
  Suppose that $x(S_i)$ and $x(S_j)$ lie in the same cluster $X$. Since $t_{x(S_i)} = 3$,
  we see that $X$ contains at least $m-3$ vertices from $y_1(S_i), \ldots, y_m(S_i)$. Since
  $x(S_j)$ has negative edges to all these vertices, it follows that $x(S_j)$ has at least
  $m-3$ incident errors. Since $m-3 > 3 = t_{x(S_j)}$, this contradicts the fact that $\cee$
  is $t$-perfect.
\end{proof}
\begin{lemma}\label{lem:triplets}
  Suppose that $G$ has a $t$-perfect clustering $\cee$. For any $u_j \in U$,
  there is a unique $S_i \in \sey$ such that $x_j$ is clustered with $x(S_i)$.
  Furthermore, this $S_i$ has the following properties:
  \begin{enumerate}
  \item $u_j \in S_i$, and
  \item $x_j$ is clustered with each vertex $y_{\ell}$ such that $u_{\ell} \in S_i$.
  \end{enumerate}
\end{lemma}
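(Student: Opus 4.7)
The plan is to analyze the errors at $x_j$ in its cluster $X := [x_j]_{\cee}$ under successively stronger hypotheses, exploiting the tight tolerance $t_{x(S_\ell)}=3$ of every triplet pivot. Write $D = d(u_j)$ and $C = c(u_j)$, so $t_{x_j} = m(D-1) + (C-2) + (3n-3)$. The key structural consequence of the pivot tolerances, which I will use repeatedly, is that every cluster containing some $x(S_\ell)$ must contain at least $m-3$ of the $m$ vertices $y_1(S_\ell),\ldots,y_m(S_\ell)$, so every other cluster contains at most $3$ of them.

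First I will show that $X$ must contain some $x(S_i)$, and that $u_j \in S_i$. If $X$ contains no $x(S_\ell)$ at all, then for each of the $D$ triplets $S_\ell \ni u_j$, at most three of the $m$ positive $x_j$-neighbors $y_k(S_\ell)$ lie in $X$, contributing $\geq D(m-3)$ errors at $x_j$; combined with $D \leq p$ and $C \leq 3n-1$, the choice $m \geq 6n+3p$ forces this to exceed $t_{x_j}$, a contradiction. Uniqueness of the resulting $S_i$ is then immediate from Lemma~\ref{lem:disjoint}. Similarly, if $x(S_i) \in X$ but $u_j \notin S_i$, the $mD$ positive $x_j$-edges to triplet $V_2$-vertices all land in triplets $S_{\ell'} \neq S_i$, of which at most $3$ endpoints lie in $X$ (yielding $\geq mD-3$ positive errors), while at least $m-3$ of the vertices $y_k(S_i)$ lie in $X$ as negative $x_j$-neighbors (yielding $\geq m-3$ further errors); the total $m(D+1)-6$ again exceeds $t_{x_j}$.

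It remains to show $\{y_\ell : u_\ell \in S_i\} \subset X$. Write $S_i = \{u_j, u_a, u_b\}$, and set $\beta = |\{y_j,y_a,y_b\} \setminus X|$ and $\gamma = |\{y_k(S_i) : k \in [m]\} \setminus X|$ (positive $x(S_i)$-neighbors missing from $X$); let $a$, $b$, $c$ be the sizes of $X$'s intersections with, respectively, the non-$S_i$ ground vertices, the triplet $V_2$-vertices outside $B_i$, and the dummies (negative $x(S_i)$-neighbors that lie in $X$). The tolerance at $x(S_i)$ yields $\beta+\gamma+a+b+c \leq 3$. Decomposing $a = a_+ + a_-$ and $b = b_+ + b_-$ according to the sign of each included vertex's edge to $x_j$, and summing errors at $x_j$ over all six categories, a direct rearrangement gives
\[ (\text{errors at }x_j) \;=\; t_{x_j} + 3 + \beta + \gamma + (a_- - a_+) + (b_- - b_+) - c. \]
Minimizing the right-hand side over the feasible region yields the lower bound $t_{x_j} + 2(\beta+\gamma)$, and since $t$-perfectness forces the left side to be $\leq t_{x_j}$, we obtain $\beta = \gamma = 0$, as required.

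The principal obstacle lies in this last step: one must split each of $a$ and $b$ according to whether the included vertex is positive or negative to $x_j$, because the minimum of the right-hand side above is achieved when every slot of the $x(S_i)$-tolerance budget is spent on vertices \emph{positive} to $x_j$ (with the remaining mass loaded into the dummies $c$). Only with this refined accounting does the factor of $2$ multiplying $\beta+\gamma$ emerge with no slack, forcing its vanishing.
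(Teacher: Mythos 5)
Your proof is correct and its existence/uniqueness and part~(1) arguments are essentially the same as the paper's (a lower bound on errors at $x_j$ obtained by capping how many triplet $V_2$-vertices or negative $x(S_i)$-neighbors can sit in the $x(S_i)$-cluster, compared against $t_{x_j}$; the paper counts ``rogue'' vertices globally via $\leq 3p$ where you count $\leq 3$ per relevant triplet, which is immaterially finer). For part~(2) your route is genuinely different. The paper defines the single set $B = N^+(x_j) \setminus N^+(x(S_i))$, verifies the numerical coincidence $\sizeof{B} - 3 = t_{x_j}$, and observes that since $B$ consists of negative $x(S_i)$-neighbors, the $x(S_i)$-tolerance allows at most $3$ of them into the cluster, so the positive $x_j$-errors to $B$ alone already exhaust the entire budget $t_{x_j}$ --- forcing every vertex of $N^+(x_j) \cap N^+(x(S_i))$, in particular every $y_\ell$ with $u_\ell \in S_i$, into the cluster. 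You instead carry out a six-way (with a two-way sign split on two of the categories) exact accounting of errors at $x_j$, derive the identity $\mathrm{err}(x_j) = t_{x_j} + 3 + \beta + \gamma + (a_- - a_+) + (b_- - b_+) - c$, and minimize under $\beta + \gamma + a + b + c \leq 3$ to get $\mathrm{err}(x_j) \geq t_{x_j} + 2(\beta + \gamma)$. The two approaches exploit the same constraint; yours buys a slightly stronger conclusion ($\gamma = 0$ as well, not just $\beta = 0$) at the cost of considerably more bookkeeping, and the paper's is cleaner because packaging everything into the set $B$ makes the ``at most $3$ escapes'' argument one-shot rather than category-by-category with a sign decomposition. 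One presentational caution: you overload $a$ and $b$ both as indices in $S_i = \{u_j, u_a, u_b\}$ and as cardinalities of intersections with $X$; the argument is right, but the collision should be repaired before this could be included in a write-up.
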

\begin{proof}
  First we prove the existence of a unique $S_i$ such that $x_j$ is clustered
  with $x(S_i)$, then we show that $S_i$ has the desired properties.

  If $y_k(S_i)$ is a triplet $V_2$-vertex not clustered with $x(S_i)$, call $y_k(S_i)$
  a \emph{rogue vertex}. It is immediate from the definition of $t$ that in a $t$-perfect
  clustering, each $B_i$ contains at most $3$ rogue vertices.
  
  To prove that $x_j$ is clustered with some $x(S_i)$, it suffices to
  show that $x_j$ is clustered with some triplet $V_2$-vertex that is
  not a rogue vertex. Since each $B_i$ contains at most $3$ rogue
  vertices, there are at most $3p$ rogue vertices in total, where $p =
  \sizeof{\sey}$. If all triplet vertices clustered with $x_j$ are
  rogue vertices, then since $x_j$ has $md(u_j)$ positive edges to
  triplet vertices, it follows that $x_j$ has at least $md(u_j) - 3p$
  incident errors. Now we have
  \[ t_{x_j} = m(d(u_j) - 1)+ (c(u_j) - 2) + (\sizeof{Z} - 3) < md(u_j) - m + 6n \leq md(u_j) - 3p, \]
  where the last inequality follows from $m \geq 6n+3p$. Thus, there are more than $t_{x_j}$ errors
  at $x_j$, contradicting the assumption that $\cee$ is $t$-perfect. Thus, $x_j$ is clustered
  with some $x(S_i)$. Uniqueness of $S_i$ follows immediately from Lemma~\ref{lem:disjoint}.

  To see that $u_j \in S_i$, suppose that $u_j \notin S_i$. Then $x_j$
  is clustered with at most $3$ triplet-vertices that are its positive
  neighbors, and therefore has at least $md(u_j) - 3$ incident
  errors. Since $md(u_j)-3 > t_{x_j}$, this contradicts the assumption
  that $\cee$ is $t$-perfect.

  Next we prove (2). Let $B = N^+(x_j) - N^+(x(S_i))$. Since $t_{x(S_i)} = 3$, the cluster containing $x_j$
  contains at most $3$ vertices from $B$. Thus, there are at least $\sizeof{B}-3$ errors from $x$
  to the vertices of $B$, where
  \[ \sizeof{B} - 3 = \sizeof{Z} + m(d(u_j) - 1) + (c(u_j) - 2) - 3 =
  t_{x_j}. \] Thus, for $\cee$ to be $t$-perfect, it is necessary that
  \emph{all} errors incident to $x_j$ are edges from $x$ to $B$. In
  particular, $x_j$ is clustered with all vertices in $N^+(x_j) \cap
  N^+(x(S_i))$, so that $x_j$ is clustered with all $y_{\ell}$ such
  that $y_{\ell} \in S_i$.
\end{proof}
\begin{corollary}
  $G$ has a $t$-perfect clustering if and only if $\sey'$ has a $3$-cover.
\end{corollary}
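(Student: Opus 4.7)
\emph{Forward direction.} Suppose the $3$-cover instance admits a $3$-cover $\sey' = \{S_{i_1}, \ldots, S_{i_n}\} \subset \sey$. Since $\sey'$ partitions $U$ into triples, $\sizeof{\sey'} = n$, and therefore $\sizeof{Z} = 3n = 3\sizeof{\sey'}$; this suggests partitioning $Z = Z_1 \cup \cdots \cup Z_n$ into $n$ blocks of size $3$, one per element of $\sey'$. The clustering I would exhibit consists of one cluster
\[ X_j = \{x(S_{i_j})\} \cup \{y_k(S_{i_j}) \st k \in \{1,\ldots,m\}\} \cup \{x_\ell, y_\ell \st u_\ell \in S_{i_j}\} \cup Z_j \]
for each $S_{i_j} \in \sey'$, together with a cluster $\{x(S_i)\} \cup \{y_k(S_i) \st k \in \{1,\ldots,m\}\}$ for each $S_i \in \sey \setminus \sey'$. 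I would then verify each tolerance by direct counting. For a ground vertex $x_\ell$ with $u_\ell \in S_{i_j}$, the positive edges leaving $X_j$ consist of $m(d(u_\ell) - 1)$ edges to triplet $V_2$-vertices of other triples containing $u_\ell$, $c(u_\ell) - 2$ edges to ground $V_2$-vertices sharing a triple with $u_\ell$ but lying outside $S_{i_j}$, and $\sizeof{Z} - 3$ edges to dummy vertices outside $Z_j$, totaling exactly $t_{x_\ell}$; no negative edge from $x_\ell$ will fall inside $X_j$. For $x(S_{i_j})$, the only errors will be the three negative edges to $Z_j$, meeting its tolerance of $3$; and for $x(S_i)$ with $S_i \notin \sey'$, every incident edge will be correctly labelled.

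\emph{Reverse direction.} Suppose instead that $\cee$ is a $t$-perfect clustering. By Lemma~\ref{lem:triplets}, each $u_\ell \in U$ determines a unique $S(u_\ell) \in \sey$ containing $u_\ell$ such that $x_\ell$ is clustered with $x(S(u_\ell))$, and moreover $x_\ell$ is clustered with every $y_{\ell'}$ satisfying $u_{\ell'} \in S(u_\ell)$. I would set $\sey' = \{S(u_\ell) \st u_\ell \in U\}$, which clearly covers $U$, and then verify disjointness. Suppose some $u_j$ lies in both $S(u_k)$ and $S(u_{k'})$. Applying Lemma~\ref{lem:triplets}(2) to both $u_k$ and $u_{k'}$ would place $x_k$ and $x_{k'}$ each in the cluster containing $y_j$, so $x_k$ and $x_{k'}$ share a cluster. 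Since that common cluster also contains $x(S(u_{k'}))$, the uniqueness assertion in Lemma~\ref{lem:triplets} then forces $S(u_k) = S(u_{k'})$. Hence each $u_j \in U$ lies in exactly one element of $\sey'$, so $\sey'$ is a valid $3$-cover.

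The main obstacle is spotting the right forward construction: the term $\sizeof{Z} - 3$ built into $t_{x_\ell}$ rigidly demands that exactly $3$ dummy vertices accompany each $x_\ell$ in its cluster, and this is feasible only because $\sizeof{Z} = 3n = 3\sizeof{\sey'}$, so that $Z$ partitions evenly across the $n$ clusters associated with $\sey'$. Once this structural insight is in place, both the error accounting in the forward direction and the cluster-tracking argument via Lemma~\ref{lem:triplets} in the reverse direction reduce to bookkeeping.
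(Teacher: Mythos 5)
Your proof takes essentially the same approach as the paper: in one direction you construct the clustering by attaching each $S_i \in \sey'$'s block $B_i$ to its ground vertices together with a fresh triple from $Z$, and in the other you extract $\sey'$ from the clusters via Lemma~\ref{lem:triplets} and use the uniqueness/disjointness machinery to show it is a partition. The only substantive slip is in the forward direction: you claim that for $S_i \notin \sey'$ the vertex $x(S_i)$ has no incident errors, but in fact $x(S_i)$ has three positive edges to the ground vertices $y_j$ with $u_j \in S_i$, and those $y_j$ are clustered elsewhere (with the $X_{j'}$ of whichever $S_{i_{j'}} \in \sey'$ covers $u_j$). So $x(S_i)$ has exactly $3$ incident errors, not $0$. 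This happens to be exactly its tolerance $t_{x(S_i)} = 3$, so the clustering is still $t$-perfect and the conclusion stands, but you should correct the count. (Your disjointness argument in the reverse direction appeals to the uniqueness clause of Lemma~\ref{lem:triplets} rather than invoking Lemma~\ref{lem:disjoint} directly as the paper does; these are logically equivalent since the uniqueness is itself derived from Lemma~\ref{lem:disjoint}.)
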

\begin{proof}
  Given any $t$-perfect clustering, let $\sey'$ be the family
  of triplets $S_i$ such that some vertex of $B_i$ is clustered
  with some $V_1$-ground-vertex $x_j$. Lemma~\ref{lem:triplets} immediately
  implies that these triplets cover all of $u$. Furthemore, Lemma~\ref{lem:triplets}
  implies that these triplets are pairwise disjoint: if $S'_1$ and $S'_2$ are triplets
  of $\sey'$ that both contain $u_j$, then Lemma~\ref{lem:triplets} would force
  each $x(S'_1)$ and $x(S'_2)$ to both be clustered with $y_j$ and hence to be clustered
  together, which contradicts Lemma~\ref{lem:disjoint}. Hence, $\sey'$ is a $3$-cover.

  Conversely, let $\sey'$ be a $3$-cover in $\sey$. We define
  a clustering of $G$. Since $\sey'$ is a $3$-cover, we have $\sizeof{\sey'} = n$.
  Let $Z_{S'_1}, \ldots, Z_{S'_n}$ be a partition of $Z$ into $n$ disjoint sets of size $3$,
  indexed by the sets of $\sey'$. Now for each $S_i \in \sey$, define a cluster $X_i$ by
  \[ X_i =
  \begin{cases}
    B_i \cup \{x_j, y_j \st u_j \in S_i\} \cup Z_{S_i}, &\text{if $S_i \in \sey'$,} \\
    B_i, &\text{otherwise.}
  \end{cases} \] Since $\sey'$ is a $3$-cover, the clusters $X_i$ are
  pairwise disjoint and cover the vertices of $G$. We claim that this
  clustering is $t$-perfect.  If $x(S_i)$ is a triplet vertex
  corresponding to some $S_i \notin \sey'$, then $x(S_i)$ has exactly
  $3$ incident errors, namely its edges to the ground-vertices $y_j$
  with $u_j \in S_i$. On the other hand, if $x(S_i)$ is a triplet
  vertex corresponding to some $S_i \in \sey'$, then $x(S_i)$ again
  has exactly $3$ incident errors, namely its edges to the
  dummy-vertices in $Z_{S_i}$.

  If $x_j$ (or $y_j$) is a ground vertex, then $x_j$ has $m(d(u_j) -
  1)$ incident errors which are positive edges to triplet-vertices,
  $c(u_j) - 2$ incident errors which are positive edges to
  ground-vertices, and $\sizeof{Z}-3$ incident errors which are
  positive edges to dummy-vertices.  This is a total of exactly
  $t_{x_j}$ incident errors.  Hence the clustering is $t$-perfect.
\end{proof}
\section{Technical Details}\label{apx:details}
\begin{lemma}\label{lem:cross-edges}
  Suppose a Type~2 cluster $\{u\} \cup T$ has just been output in
  Algorithm~\ref{alg:round}. For any
  $z \in S - (\{u\} \cup T)$, the total cluster-cost of the
  cross-edges for $z$ is at most $\max\{1/(1-2\alpha), 2/\alpha\}$
  times the total LP-cost of the cross-edges for $z$.
\end{lemma}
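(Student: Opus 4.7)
Let $X = \{u\} \cup T$. Every cross-edge $zw$ has cluster-cost $1$ if $zw \in E^+$ and $0$ if $zw \in E^-$, since $z$ lies outside of $X$. My plan is to split the analysis into two cases based on the value of $x_{uz}$, choosing a suitable uniform charging factor in each case, and using the fact that $z \notin T_u$ forces $x_{uz} > \alpha$.

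Case A (large $x_{uz}$): If $x_{uz} \geq 1 - \alpha$, I would apply Observation~\ref{obs} to each $w \in T$ to obtain $x_{zw} \geq x_{uz} - x_{uw} \geq 1 - 2\alpha$. Together with $x_{uz} \geq 1 - \alpha \geq 1 - 2\alpha$, this implies that every \emph{positive} cross-edge (the only ones with nonzero cluster-cost) has LP-cost at least $1 - 2\alpha$. Charging each cross-edge a factor of $1/(1-2\alpha)$ times its LP-cost therefore pays for all the cluster-cost.

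Case B (small $x_{uz}$): If $\alpha < x_{uz} < 1-\alpha$, I would show that the factor $2/\alpha$ suffices. The edge $zu$, if positive, has LP-cost $x_{uz} > \alpha$, so $2/\alpha$ times its LP-cost more than covers its cluster-cost. For the remaining cross-edges, I would apply Observation~\ref{obs} per-edge: positive $zw$ has LP-cost $\geq x_{uz} - x_{uw}$, and negative $zw$ has LP-cost $\geq 1 - x_{uz} - x_{uw}$. Summing over $w \in T$ and invoking the Type~2 condition $\sum_{w \in T} x_{uw} < \alpha|T|/2$ yields
\[
\sum_{w \in T^+} x_{zw} + \sum_{w \in T^-}(1 - x_{zw}) \geq |T^+|\, x_{uz} + |T^-|\,(1 - x_{uz}) - \alpha|T|/2.
\]
I would then verify that this quantity is at least $\alpha |T^+|/2$, which after rearrangement becomes $|T^+|(x_{uz} - \alpha) + |T^-|(1 - x_{uz} - \alpha/2) \geq 0$; both terms are non-negative in Case~B, since $x_{uz} > \alpha$ and $1 - x_{uz} > \alpha > \alpha/2$.

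The main technical subtlety will be the per-sign accounting in Case~B and the clean use of the Type~2 inequality to convert per-edge triangle-inequality bounds into a collective lower bound on the total LP-cost. Once this is in place, the claimed overall factor $\max\{1/(1-2\alpha),\,2/\alpha\}$ follows directly from taking the worse of the two cases.
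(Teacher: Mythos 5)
Your proposal is correct and follows essentially the same approach as the paper: split on $x_{uz} \geq 1-\alpha$ versus $\alpha < x_{uz} < 1-\alpha$, apply Observation~\ref{obs} per edge, and invoke the Type~2 averaging bound $\sum_{w \in T} x_{uw} < \alpha|T|/2$ to get the collective lower bound on LP-cost. The only cosmetic differences are that the paper sums over $\{u\}\cup T$ at once and checks the (linear) bound at the two endpoints of $(\alpha,1-\alpha)$, while you peel off the edge $zu$ and verify the rearranged inequality directly via term-by-term nonnegativity — both yield the factor $2/\alpha$ in the second case.
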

\begin{proof}
  This is essentially the same proof given by Charikar, Guruswami, and Wirth~\cite{CGW1,CGW2};
  we repeat it here to keep the paper self-contained. If $x_{uz} \geq 1-\alpha$,
  then for each $w \in \{u\} \cup T$, we have
  \[ x_{wz} \geq x_{uz}-x_{uw} \geq 1-2\alpha. \]
  If there are $p$ positive cross-edges, this implies that the total LP-cost of the
  cross-edges for $z$ is at least $(1-2\alpha)p$. Since the total cluster-cost of
  the cross-edges for $z$ is $p$, the claim holds.

  Now consider $x_{uz} \in (\alpha, 1-\alpha)$. Let $P = N^+(z) \cap
  (\{u\} \cup T)$ and let $Q = N^-(z) \cap (\{u\} \cup T)$; the total
  cluster-cost of the cross-edges for $z$ is just $\sizeof{P}$. We
  have the following lower bound on the total LP-cost of the
  cross-edges for $z$:
  \begin{align*}
    \sum_{w \in P}x_{wz} + \sum_{w \in N}(1-x_{wz}) &\geq \sum_{w \in P}(x_{uz} - x_{uw}) + \sum_{w \in Q}(1 - x_{uz} - x_{uw}) \\
    &= \sizeof{P}x_{uz} + \sizeof{Q}(1-x_{uz}) - \sum_{w \in \{u\} \cup T}x_{uw} \\
    &\geq \sizeof{P}x_{uz} + \sizeof{Q}(1-x_{uz}) - \frac{\alpha(\sizeof{P} + \sizeof{Q})}{2},
  \end{align*}
  where in the last line we used the inequality $\sum_{w \in \{u\}\cup T}x_{uw} \leq \frac{\alpha\sizeof{\{u\} \cup T}}{2}$.
  This lower bound is linear in $x_{uz}$, so we study its behavior at the endpoints of $(\alpha, 1-\alpha)$.
  When $x_{uz}=\alpha$, the lower bound rearranges as follows:
  \[ \alpha\sizeof{P} + (1-\alpha)\sizeof{Q} - \frac{\alpha(\sizeof{P} + \sizeof{Q})}{2} = \frac{\alpha}{2}\sizeof{P} + (1-\frac{3\alpha}{2})\sizeof{Q} \geq \frac{\alpha}{2}\sizeof{P}.\]
  When $x_{uz}=1-\alpha$, the lower bound rearranges as follows:
  \[ (1-\alpha)\sizeof{P} + \alpha\sizeof{Q} - \frac{\alpha(\sizeof{P} + \sizeof{Q})}{2} = (1 - \frac{3\alpha}{2})\sizeof{P} + \frac{\alpha}{2}\sizeof{Q} \geq \frac{\alpha}{2}\sizeof{P}. \]
  In both cases, we used the assumption $\alpha < 1/2$, which implies $1 - \frac{3\alpha}{2} \geq \frac{\alpha}{2}$. It follows that charging $\frac{2}{\alpha}$ times the LP-cost of each cross-edge yields enough charge
  to pay for the cluster-cost of all cross-edges.
\end{proof}
\begin{lemma}\label{lem:bip-cross}
  Suppose that a Type 2 cluster $C$ has just been output in
  Algorithm~\ref{alg:bip-round}. For any vertex $z \in V_1 - C$, the
  total cluster-cost of the cross-edges for $z$ is at most
  $\max\{1/(1-2\alpha),\ 2/\alpha\}$ times the total LP-cost of the cross-edges for $z$.
\end{lemma}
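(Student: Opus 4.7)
The plan is to mirror the proof of Lemma~\ref{lem:cross-edges} almost verbatim, with the single substitution that the relevant cluster ``side'' for $z \in V_1$ is $V_2 \cap T$ rather than all of $\{u\} \cup T$. Since $G$ is bipartite and $z \in V_1$, all cross-edges from $z$ go to $V_2 \cap T$, and both $u$ (inside the cluster) and $z$ (outside) lie in $V_1$ so the edge $uz$ does not exist in $G$; this is the only structural difference from the complete-graph version.

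I would split into two cases based on $x_{uz}$. In the first case, $x_{uz} \geq 1-\alpha$, the triangle inequality gives $x_{wz} \geq x_{uz} - x_{uw} \geq 1 - 2\alpha$ for every $w \in V_2 \cap T$. Any positive cross-edge $wz$ therefore has LP-cost $x_{wz} \geq 1-2\alpha$, so charging each positive cross-edge a factor of $1/(1-2\alpha)$ times its LP-cost suffices to pay for its unit cluster-cost (negative cross-edges contribute no cluster-cost).

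In the second case, $x_{uz} \in (\alpha, 1-\alpha)$ (the lower bound comes from $z \notin T$, which forces $x_{uz} > \alpha$). I would write $P = N^+(z) \cap (V_2 \cap T)$ and $Q = N^-(z) \cap (V_2 \cap T)$, so the total cluster-cost to be paid for is $\sizeof{P}$ and $\sizeof{P} + \sizeof{Q} = \sizeof{V_2 \cap T}$. Applying the triangle inequality termwise to $\sum_{w \in P}x_{wz} + \sum_{w \in Q}(1-x_{wz})$ and then invoking the Type~2 branching condition $\sum_{w \in V_2 \cap T} x_{uw} < \alpha \sizeof{V_2 \cap T}/2$ yields the lower bound
\[
\text{total LP-cost} \;\geq\; \sizeof{P}\,x_{uz} + \sizeof{Q}(1-x_{uz}) - \tfrac{\alpha(\sizeof{P}+\sizeof{Q})}{2}.
\]
This expression is linear in $x_{uz}$, so I would check the two endpoints $x_{uz}=\alpha$ and $x_{uz}=1-\alpha$; in each case, using $\alpha < 1/2$ to absorb a $(1-\tfrac{3\alpha}{2})$ coefficient, the expression simplifies to at least $\tfrac{\alpha}{2}\sizeof{P}$. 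Hence charging each cross-edge a factor of $2/\alpha$ times its LP-cost pays for the cluster-cost $\sizeof{P}$. Combining the two cases gives the claimed factor $\max\{1/(1-2\alpha),\, 2/\alpha\}$.

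There is no real obstacle here, since the argument is structurally identical to that of Lemma~\ref{lem:cross-edges}; the only thing to verify carefully is that the Type~2 cutoff in Algorithm~\ref{alg:bip-round} is stated in terms of $V_2 \cap T$ (not all of $T$), which is exactly what makes the averaging bound match up when we restrict attention to $z$'s neighbors in $V_2 \cap T$.
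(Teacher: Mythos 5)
Your proposal is correct and follows essentially the same argument as the paper's proof, which likewise mirrors Lemma~\ref{lem:cross-edges}: split on whether $x_{uz} \geq 1-\alpha$ or $x_{uz} \in (\alpha, 1-\alpha)$, and in the second case apply the triangle inequality termwise, invoke the Type-2 branching condition $\sum_{w \in V_2 \cap T} x_{uw} < \alpha\sizeof{V_2\cap T}/2$, and check the endpoints of the linear-in-$x_{uz}$ lower bound to obtain $\geq \frac{\alpha}{2}\sizeof{P}$. Your phrasing $P = N^+(z) \cap (V_2 \cap T)$ is a cosmetic simplification of the paper's $P = N^+(z) \cap (\{u\}\cup T)$ together with its observation that $P \cup Q = V_2 \cap T$, since no edge $uz$ exists; otherwise the two proofs coincide, and your version correctly supplies the final charging factor $2/\alpha$ where the paper has a stray typo.
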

\begin{proof}
  We essentially repeat the proof of Lemma~\ref{lem:cross-edges}. If $x_{uz} \geq 1-\alpha$,
  then for each $w \in \{u\} \cup T$, we have
  \[ x_{wz} \geq x_{uz}-x_{uw} \geq 1-2\alpha. \]
  If there are $p$ positive cross-edges, this implies that the total LP-cost of the
  cross-edges for $z$ is at least $(1-2\alpha)p$. Since the total cluster-cost of
  the cross-edges for $z$ is $p$, the claim holds.

  Now consider $x_{uz} \in (\alpha, 1-\alpha)$. Let $P = N^+(z) \cap
  (\{u\} \cup T)$ and let $Q = N^-(z) \cap (\{u\} \cup T)$; the total
  cluster-cost of the cross-edges for $z$ is just $\sizeof{P}$. Note
  that $P \cup Q = V_2 \cap T$. We have the following lower bound on
  the total LP-cost of the cross-edges for $z$:
  \begin{align*}
    \sum_{w \in P}x_{wz} + \sum_{w \in N}(1-x_{wz}) &\geq \sum_{w \in P}(x_{uz} - x_{uw}) + \sum_{w \in Q}(1 - x_{uz} - x_{uw}) \\
    &= \sizeof{P}x_{uz} + \sizeof{Q}(1-x_{uz}) - \sum_{w \in V_2 \cap T}x_{uw} \\
    &\geq \sizeof{P}x_{uz} + \sizeof{Q}(1-x_{uz}) - \frac{\alpha}{2}(\sizeof{P} + \sizeof{Q}),
  \end{align*}
  where in the last line we used the inequality $\sum_{w \in V_2 \cap T}x_{uw} \leq \frac{\alpha}{2}\sizeof{\{u\} \cup T}$.
  This lower bound is linear in $x_{uz}$, so we study its behavior at the endpoints of $(\alpha, 1-\alpha)$.
  When $x_{uz}=\alpha$, the lower bound rearranges as follows:
  \[ \alpha\sizeof{P} + (1-\alpha)\sizeof{Q} - \frac{\alpha}{2}(\sizeof{P} + \sizeof{Q}) = \frac{\alpha}{2}\sizeof{P} + (1-\frac{3\alpha}{2})\sizeof{Q} \geq \frac{\alpha}{2}\sizeof{P}.\]
  When $x_{uz}=1-\alpha$, the lower bound rearranges as follows:
  \[ (1-\alpha)\sizeof{P} + \alpha\sizeof{Q} - \frac{\alpha}{2}(\sizeof{P} + \sizeof{Q}) \geq (1 - \alpha - \frac{\alpha}{2})\sizeof{P} + \frac{\alpha}{2}\sizeof{Q} \geq \frac{\alpha}{2}\sizeof{P}. \]
  In both cases, we used the assumption $\alpha < 1/2$. It follows that when $x_{uz} \in (\alpha, 1-\alpha)$, charging $1/(\alpha-\beta)$ times the LP-cost of each cross-edge yields enough charge
  to pay for the cluster-cost of all cross-edges.
\end{proof}
\fi
\end{document}